\def\bfh{{\mathbf h}}
\def\bfn{{\mathbf n}}
\def\bfg{{\mathbf g}}
\def\bfy{{\mathbf y}}
\def\cR{{\mathbf R}}
\def\SINR{\mathrm{SINR}}
\def\cC{{\cal C}}
\def\cR{{\cal R}}
\def\E{\mathbb{E}}
\newtheorem{theorem}{Theorem}
\newtheorem*{theorem*}{Theorem}
\newtheoremstyle{named}{}{}{\itshape}{}{\bfseries}{.}{.5em}{\thmnote{#3}}
\theoremstyle{named}
\newtheorem*{namedtheorem}{Theorem}
\begin{document}

\title{Random Pilot and Data Access in Massive MIMO for Machine-type Communications}

\IEEEoverridecommandlockouts

\author{Elisabeth de Carvalho,~\IEEEmembership{Senior Member,~IEEE,}  
Emil Bj{\"o}rnson,~\IEEEmembership{Member,~IEEE,}
Jesper H. S\o{}rensen,~\IEEEmembership{Member,~IEEE,}\\
        Erik G. Larsson,~\IEEEmembership{Fellow,~IEEE,}
        Petar Popovski,~\IEEEmembership{Fellow,~IEEE}
\thanks{E.~de Carvalho, J.~H.~S\o{}rensen, and P.~Popovski are with the Department of Electronic Systems, Aalborg University, Aalborg, Denmark (email: edc@es.aau.dk, jhs@es.aau.dk, petarp@es.aau.dk).
E.~Bj\"ornson and E.~G.~Larsson are with the Department of Electrical Engineering (ISY), Link\"{o}ping University, Link\"{o}ping, Sweden (email: emil.bjornson@liu.se, erik.g.larsson@liu.se).
}%
\thanks{This work was performed partly in the framework of the Danish Council for Independent Research (DFF133500273),  the Danish National Advanced Technology Foundation via the VIRTUOSO project,
the Horizon 2020 project FANTASTIC-5G (ICT-671660), the EU FP7 project MAMMOET (ICT-619086), ELLIIT, and CENIIT. The authors would like to acknowledge the contributions of the colleagues in FANTASTIC-5G and MAMMOET.
}%
\thanks{This work was presented in part at the IEEE International Conference on Acoustics, Speech and Signal Processing (ICASSP), Shanghai, China,  2016.}%
}

\maketitle

\begin{abstract}
A massive MIMO system, represented by a base station with hundreds of antennas, is capable of spatially multiplexing many devices and thus naturally suited to serve dense crowds of wireless devices in emerging applications, such as machine-type communications. Crowd scenarios pose new challenges in the pilot-based acquisition of channel state information and call for pilot access protocols that match the intermittent pattern of device activity. A joint pilot assignment and data transmission protocol based on random access is proposed in this paper for the uplink of a massive MIMO system. The protocol relies on the averaging across multiple transmission slots of the pilot collision events that result from the random access process. 
We derive new uplink sum rate expressions that take pilot collisions, intermittent device activity, and interference into account. Simplified bounds are obtained and used to optimize the device activation probability and pilot length. A performance analysis indicates how performance scales as a function of the number of antennas and the transmission slot duration. 
\end{abstract}
%

\section{Introduction}

In a massive multiple-input multiple-output (MIMO) system~\cite{Marzetta2010a}, 
the base station (BS) is equipped with a very large number of antennas that create a very large number of spatial degrees of freedom (DoFs) under the asymptotic favorable propagation conditions that most channels seem to offer~\cite{Bjornson2016b,MDN2014,Ngo2014a}. 
Those spatial DoFs can be used in different ways. 
If the number of antennas at the BS is much larger than the number of wireless devices~\cite{Marzetta2010a},
then the extra DoFs are exploited to generate strong spatial beams to the devices, which are in effect hardened communication channels with negligible small-scale fading and a minor multi-user interference.
When the number of devices is large, the DoFs can be used for aggressive spatial multiplexing of devices, which does bring the largest cell spectral efficiency~\cite{Bjornson2016a}. However, the devices' channels are not asymptotically decorrelated in this operating regime and the spectral efficiency per device is thus lowered. 
The enhanced spatial resolution of massive MIMO is essential in \emph{crowd} scenarios, such as shopping mall or a stadium \cite{5GBookMetis}, serving high density of devices that are closely spaced. The scenario that sets the motivation for this article is the one of massive Machine-Type Communication (mMTC), where around $100\,000$ devices could be connected to a single BS. 

Acquisition of channel state information (CSI) in massive MIMO is critical, as a huge number of channel 
coefficients have to be estimated per device. Time division duplexing (TDD) offers a simplified CSI acquisition in a massive MIMO system with a relatively small number of devices. 
Assuming channel reciprocity, the channels are estimated at the BS relying on uplink training based on {\it orthogonal} pilot sequences and are then used for both uplink and downlink beamforming. Hence, the length/number of the pilot sequences scales with the number of devices and not the number of BS antennas. In crowd and mMTC scenarios, the channel estimation faces fundamental limits because of two specific features. 
\emph{First}, the number of pilot sequences is limited by the dimensionality of the coherence interval of the channel, during which the channel can be considered constant and flat. When 
the crowd of device becomes very large, the orthogonal pilots are in severe shortage and the allocation policy of the pilot sequences becomes a central question. 
\emph{Second}, the intermittence of the data traffic also plays a significant role. In mMTC, each device sends data to the BS sporadically, at random time instants. Since not all devices are active simultaneously, the limited set of orthogonal pilot sequences can highly likely accommodate the subset of devices that are active at a particular instant. Hence, pilot allocation has rather to adapt and scale with the traffic activity pattern and not equal the actual number of devices present in the system. A natural choice is to decentralize pilot access to the devices and make it random, which leads to pilot collision, also known as \emph{pilot contamination}. 


While massive MIMO is a fairly mature research topic \cite{Larsson2014a,Huh2012a,Ngo2013a,Bjornson2016b,Hoydis2013a,Bjornson2016a}, the existing results on uplink capacity analysis in the literature \cite{Ngo2013a} assumes  a pre-defined pilot allocation and 
full data buffers at the devices. Those assumptions are not applicable to the case we study here. 
A joint pilot and data protocol based on random uplink access in a massive MIMO system 
has been recently considered in \cite{SDP2014,Sorensen2016a}, with the use of coded access and successive interference cancellation (SIC).
In~\cite{Bjornson2016e,Bjornson2017a}, a set of dedicated pilots is temporarily allocated to the active devices, which in turn apply random access to the pilot sequences in the set. This leads to collisions, which are resolved by giving priority to devices with good channels, while the data transmission is carried out without pilot collisions.
Both schemes \cite{SDP2014,Sorensen2016a,Bjornson2016e,Bjornson2017a} rely on channel hardening. 
In \cite{Sanguinetti2016a}, timing variations between the devices are used to devise a random access scheme for initial access, synchronization, and channel estimation. Some preliminary results on the effect of intermittent device activity can be found in \cite{BjornsonGC2015}. 

{In the approach proposed in this paper, transmission is organized into transmission frames consisting of multiple transmission slots. 
When an mMTC device has a data codeword to transmit,  it divides the codeword into multiple parts and, within a transmission frame, 
transmits  the codeword parts in multiple transmission slots.}
The channels are estimated from uplink pilots every time the  device transmits. 
In each time slot, each active device selects pseudo-randomly a pilot from a predetermined pilot codebook and, during the rest of the slot, it sends a part of the data codeword. Hence, a device performs \emph{pilot-hopping} over multiple slots and the hopping pattern can be used to identify the device and appropriately merge and decode the parts of its codeword at the BS. This approach is suitable to harvest the large array gain for delay-tolerant mMTC and 
low-power devices.  {It is important to note that our access scheme relies on non-orthogonal pilot-hopping patterns, each sequence consisting of orthogonal pilots. Another option would be to build hopping patterns based on non-orthogonal pilots and our scheme can be readily extended to that case.} 

{
When the number of transmission slots is sufficiently large, the pilot collision events get averaged, so that 
we can define a maximal achievable uplink sum rate.}
We determine a lower bound on the  sum rate that is tight thanks to channel hardening and when the total number
of terminals is large. This bound is our main metric for
performance assessment.
Furthermore, in order to optimize the system performance, the devices are allowed to transmit  with a certain \emph{activation probability} $p_a$ which is subject to optimization, along with the number of training sequences. 
As the numerical evaluation of our main performance bound can be cumbersome, optimization is based on alternative bounds, tailored to account for rather large variations in channel energy among the devices. 
We provide a performance analysis for different asymptotic cases. In particular, 
when the number of antennas $M$ and the duration of transmission slot $\tau_u$ are of the same order, the sum rate  scales of 
$\sqrt{M \tau_u}$. Heuristic solutions giving robust performance results indicate that one third of the transmission slot should be devoted to training while the average number of active devices should be at the order of $\sqrt{M \tau_u}$.
Compared to~\cite{deCarvalho2016a}, we provide performance analysis and solutions for the case where the channel energy varies across the devices.

The paper is organized as follows. 
Section~\ref{sec:sysmodel} describes the random access model, the communication protocol, and the channel model. 
Section~\ref{sec:Bound1} presents the main performance bound, Section~\ref{sec:Bounds2} the lower bounds used for optimization and  
Section~\ref{sec:Bounds3} the heuristic solutions. 
In Section~\ref{sec:Scale}, scaling laws are presented as a function of the number of antennas at the base station and the duration of a transmission slot. Section~\ref{sec:numerical} contains numerical evaluations and is followed by the conclusions.

\section{System Model}
\label{sec:sysmodel}

\subsection{Random Access Model}
\label{sec:RA}

We consider the uplink (UL) of a single-cell multi-device massive MIMO system
 with random access from a large set of intermittently active devices. 
The justification for a single-cell model is found in Section~\ref{sec:MultiCell}.
The BS is equipped with $M$ antennas and  serves a maximal number of $K$ devices.
%

A total number of $\tau_p$ orthogonal sequences are available, denoted as 
 $\{ \boldsymbol{\Phi}_1 , \boldsymbol{\Phi}_2 , \dots, \boldsymbol{\Phi}_{\tau_p} \}$, where each sequence is $\tau_p$ symbols long. 
The duration of a pilot sequence is smaller than the channel coherence interval. 
Moreover, $K \gg \tau_p$ such that the BS cannot allocate dedicated pilots to particular devices.
{
The system is assumed to be synchronized to maintain pilot sequence orthogonality. }
{
It is important to note that an OFDM system can tolerate timing mismatches provided that the cyclic prefix is larger than the channel delay spread and the maximal timing mismatch. Such conditions could be satisfied in small enough cells in LTE as the normal cyclic prefix is designed for a 750 m cell radius and the extended cyclic prefix corresponds to a 2.5 km cell radius. 
}


A UL transmission frame, see Figure~\ref{fig:Model3}, is divided into transmission slots of duration $\tau_u$, where 
$\tau_u$ is less or equal to the channel coherence interval. 
Due to the random access, the BS does not know a priori for a given slot which devices transmit or which pilots are activated. Instead, each device is associated to a unique, predefined pseudo-random \emph{pilot-hopping pattern} such that an active device selects the pilot for a given transmission slot according to this pattern. 
The BS  knows in advance the pilot-hopping patterns of all potential transmitters, such that it can buffer the information from different slots, run a correlation decoder across the slots and find out which pilot-hopping patterns have been activated. 
{When a device has data to send, it encodes it into codewords, so that one codeword is sent within a given transmission frame. 
Each codeword is divided into multiple parts and sent over multiple time slots within the transmission frame. 
The number of codeword parts  is equal to the number of pilot sequences in the pilot-hopping identifying patterns. 
Transmission from the $K$ devices is sporadic and a device is active within a given transmission frame with activation probability $p_a$, independently from the other devices.  }
{An active device  transmits in all transmission slots of a transmission frame. Within a transmission frame, a device is active only when it has data to transmit.}


In each active UL transmission slot, the pilot phase is followed by a \emph{data phase}, i.e., transmission of a part of a codeword. Collisions can thus happen in the \emph{pilot domain}, i.e., among contending devices that send the same pilot sequence to the  BS. 
Pilot collision is in fact similar to pilot contamination in conventional massive MIMO and induces interference in the data domain from the set of contaminators. 

For an asymptotically large number of time slots, the whole codeword is affected by an asymptotically large number of channel fading realizations, pilot collision-induced interference and other interference events. Relying on the ergodicity of such a process, the system performance can be characterized as in Section~\ref{sec:Bound1}. Both the activation probability $p_a$ and the number $\tau_p$ of pilots per slot are optimized as elaborated in Section~\ref{sec:Bounds2}.


We model the process of constructing of a pseudo-random hopping sequence by having each active device in each slot select randomly one of the $\tau_p$ sequences. Hence, 
%
the probability of having  $K_a$ active devices  within a total of $K$  devices that become independently active with probability $p_a$ is
\begin{equation}
p(K_a) 
= \left(\!\! \begin{array}{c} K \\ K_a \end{array}\!\! \right) p_a^{K_a } (1-p_a)^{K_a  - K}.
\label{eq:pKa}
\end{equation}
The average number of active devices is $p_a K$ and its variance is $p_a K (1- p_a)$.
Given a number of $K_a$ active devices and considering one given device $0$ among the $K_a$ active devices,  
the probability of having $c$ colliders to device $0$ can be expressed as the probability that $c$ devices select the same sequence as device $0$ in a population of $K_a - 1$, i.e.,
\begin{equation}
p(c|K_a) 
= \left(\!\! \begin{array}{c} K_a \! - \! 1 \\ c \end{array}\!\! \right) 
\left(\frac{1}{\tau_p}\right)^{c} \left(1-\frac{1}{\tau_p} \right)^{K_a-c-1}.
\label{eq:pcKa}
\end{equation}

\begin{figure}[!t]
\centering
\includegraphics[width=8.3cm]{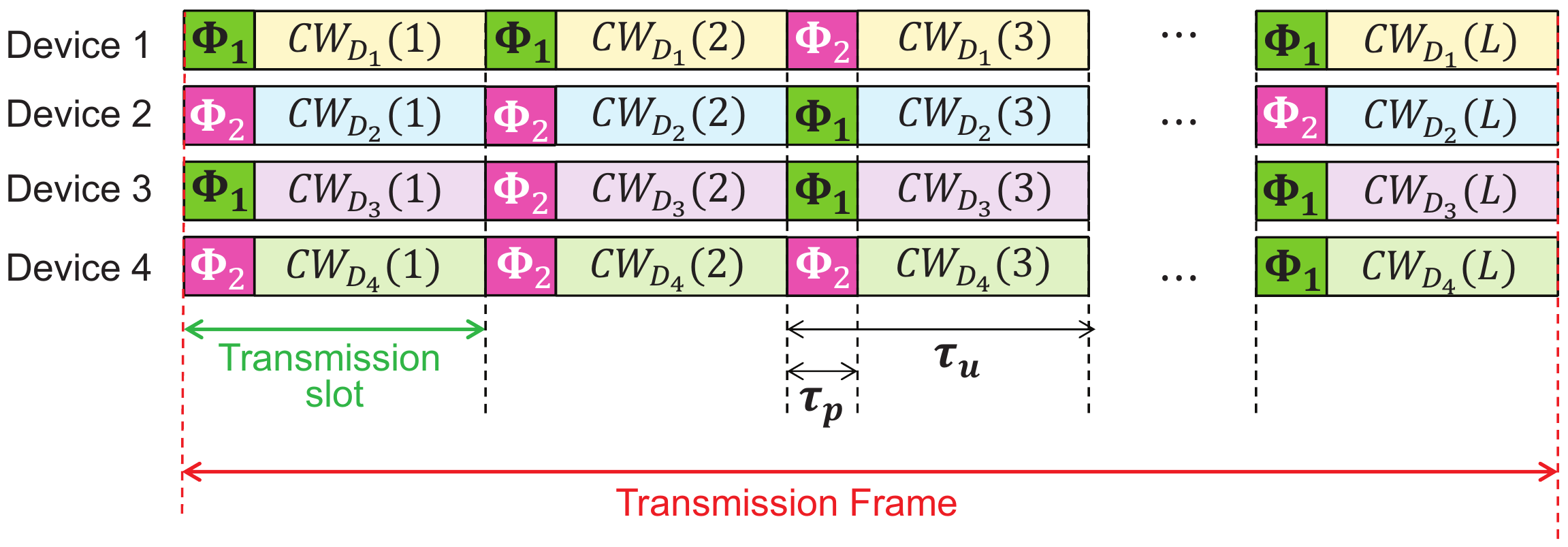} 
\caption{Illustration of the transmission frame. In this example, four devices $\{D_1,D_2, D_3,D_4\}$ and two mutually orthogonal pilot sequences  $\{\boldsymbol{\Phi}_1, \boldsymbol{\Phi}_2\}$ are considered. Transmission of a codeword is done over multiple channel fades, which enables averaging over noise, channel fades, and pilot collision events.}
\label{fig:Model3}
\end{figure}

\subsection{Channel Model}
\label{sec:Models}

A block fading model is adopted where a channel realization is constant across a transmission slot duration and changes independently from slot to slot.
A transmission slot is defined as a time-frequency interval that is smaller than the coherence time and coherence bandwidth of the channel.  
{In our model, the received signal is divided by the standard deviation of the receive additive noise, so that the normalized $M \times 1$ receive noise vector $\bfn$ at the BS is modelled as 
$\bfn \sim {\cal C \cal N}(\mathbf{0}, \mathbf{I}_M)$.
The channel response between the BS and device $j$ is described by an $M \times 1$ channel vector $\bfg_j$. The channel realizations are modeled as circularly symmetric complex Gaussian distributed:
$\bfg_j \sim {\cal C \cal N}(\mathbf{0}, \beta_j \mathbf{I}_M)$. 
The variance
${\beta_j}$ reflects the path loss, shadowing, transmit power (relative to the received noise power). 
One given parameter ${\beta_j}$ is assumed invariant during the whole transmission frame. }

{
The devices transmit with different powers: device $j$ transmits with power $p_j$ during both the training and data phase. 
We recall that the parameter $\beta_j$ incorporates the transmit power $p_j$.
Three models are considered for the distribution of $\beta_j$. 
In the  first two models, power control (performed during the initial access phase \cite{Dahlman:2011,Bjornson2017a}) is assumed, meaning that the transmit power from each device is adjusted so that all the $\beta_j$'s are equal.}
{
\begin{description}
\item[Model 1:] $\beta_j = \overline{\delta}_{1} (1+ v)$, where $v$ is uniformly distributed between $-\alpha$ and    
$\alpha$, with  $0 \leq \alpha \leq 1$. 
It is assumed that an imperfect power control is performed so that there is an error around a nominal value $\overline{\delta}_{1}$. 
This model was chosen as large variations of  $\alpha$ do not change significantly the optimization in Section~\ref{sec:Scale}, while the following two models do. 
\item[Model 2:] $\beta_j =  \overline{\delta}_{2} 10^{v/10}$ or 
$10\log_{10}(\overline{\beta}_{j}) =  10\log_{10}(\overline{\delta}_{2}) + v$,
 where $v$ is modelled as a Gaussian random variable with variance $\sigma_v^2$.
In this model, the parameter $\beta_j$ follows a log-normal distribution, i.e. the conventional distribution for shadowing. Power control  is performed so that the mean of the logarithm of $\beta_j$ is the same for all devices. 
\item[Model 3:] $\beta_j =\overline{\delta}_{3} (d/d_0)^{-\alpha_p}$, $d = d_0 (1+v)$, where $v$ is  uniformly distributed between  $-\alpha$ and $\alpha$, with  $0 \leq \alpha \leq 1$.   
Parameter $d_0$ is set to $500$ meters and $\alpha_p=3.76$.
This model accounts for the path loss when  devices are  distributed uniformly around a nominal distance from the cell center.
\end{description}
In the simulations, $\overline{\delta}_{1}=\overline{\delta}_{2}=\overline{\delta}_{3}$ and 
$\overline{\delta}_{j}$ is set to 10\,dB. 
As will be made clear in Section~\ref{sec:Bounds2}, 
the final results do not significantly depend on the values selected for $\overline{\delta}_1$, $\overline{\delta}_2$, $\overline{\delta}_3$, 
provided  we are in interference-limited scenarios. 
Furthermore, we  introduce the notations $\E [\beta_j] =\overline{\beta}$, $\E [\beta^2_j] =\overline{\beta^2}$,  $\E [\beta^4_j] =\overline{\beta^4}$. 
Note that $\overline{\beta}$ is equal to $\overline{\delta}_1$ for Model~1 but not to $\overline{\delta}_2$ or 
$\overline{\delta}_3$ for Model~2 and Model~3. }

We use  $(\cdot)^*$,  $(\cdot)^T$, $(\cdot)^H$, $\E[\cdot]$  to denote complex conjugation, transpose, Hermitian transpose, and the expected value of a random variable, respectively.

{
\subsection{Training Phase}
During the training phase, the active devices send the normalized pilot training sequences 
$\boldsymbol{\Phi}_k$, $\boldsymbol{\Phi}_k^H \boldsymbol{\Phi}_k=1$, of length $\tau_p$. 
These sequences are orthogonal: $\boldsymbol{\Phi}_k^H \boldsymbol{\Phi}_{k'}=0$, for $k\neq k'$.
The $M \times \tau_p$ matrix of the received pilot signals at the BS is
\begin{equation}
 \mathbf{Y}_p = \sqrt{\tau_p  } \sum_{k=0}^{K_a} \mathbf{g}_k \boldsymbol{\Phi}_k^T + \mathbf{N}_p ,
 \label{eq:RecSig}
\end{equation}
where $K_a$ is the number of active devices in a given slot and $\mathbf{N}_p$ is an $M \times \tau_p$ matrix whose columns are the noise vectors during the training phase. \\
\indent
One given device, device $0$, is affected by a set of contaminators denoted as $\cC_0$, which use the same pilot sequence that we generically denote as 
$\boldsymbol{\Phi}_{[0]}$, i.e. $\boldsymbol{\Phi}_{[0]} = \boldsymbol{\Phi}_{j}, j\in \{1, \cdots, \tau_p\}$.
}

{
\subsection{Multi-Cell Operation}
\label{sec:MultiCell}
The main line of work on massive MIMO, starting from \cite{Marzetta2010a}, has assumed that all devices in a given cell use orthogonal pilots and analyzed the system performance under inter-cell pilot collisions due to pilot reuse across cells. 
Our main assumption is that intra-cell pilot collision happens because the crowd of devices calls for random pilot access. 
This assumption is by itself a departure from the conventional multi-cell interference models. 
Note that \cite{Caire13b} considers inter-cell pilot contamination in the sense that it provides a pilot reuse plan 
within a cell so that users with very different spatial characteristics use the same pilot, hence suppressing pilot contamination. 
In contrast, our system is uncoordinated with no specific assumptions of the channel properties, so the intra-cell pilot contamination can be strong.
Concerning the inter-cell pilot collision, there are two possibilities. 
\begin{itemize}
\item One possibility is to assign orthogonal pilot sets to each cell where the sets are also orthogonal across neighboring cells, so that one can neglect the inter-cell pilot collision. This protocol has the disadvantage of decreasing the number of available orthogonal pilot sequences per cell for a given pilot sequence length, as the orthogonal sequences are distributed across cells. The advantage is that pilot contamination can be better controlled  when it is confined to a cell as opposed to multiple cells. 
This model is equivalent to the single-cell model in Section~\ref{sec:RA} as far as inter-cell pilot contamination is concerned. In data transmission mode, inter-cell interference is present. 
\item  The other possibility is to have the whole set of orthogonal sequences available in each cell, resulting in inter-cell pilot collision. 
Assuming the same activation probability in all the cells and a uniform spatial coverage, the difference with a single-cell model is:
\begin{itemize}
\item The distribution of channel gains $\{\beta_j\}$. 
Model 3 can still be applied to multiple cells as it is based on distance to the BS of interest. However, it is not the case for Model~1 and Model~2 as the underlying parameters take as reference the cell the device belongs to. The distribution would be different for the devices in the cell and the devices in the interfering cells. 
\item The sum rate involves only the devices in the cell of interest. 
\end{itemize}
\end{itemize}
In this paper, we focus on a single-cell model. 
As the random access protocol presented in this paper is new, an initial analysis based on this simplified model is necessary to understand the main mechanism behind the protocol. The study of a multi-cell system as presented above is left for future work.
}

\section{Random Access Protocol}
\label{sec:ProtocolDescription}

{
The  pilot and data random access protocol at the BS is outlined in Section~\ref{sec:ProtDesc}. It relies on the channel estimation, device identification and decoding procedure presented in Sections~\ref{sec:ChanEst}, \ref{sec:DevID} and \ref{sec:dec}. 
\subsection{Protocol Description at Receiver} \label{sec:ProtDesc}
The following procedure is considered  for the  transmission of one codeword per device within a transmission frame.  
The notations are defined in Section~\ref{sec:ChanEst}.
\\[2mm]
\noindent  {\bf Step 1: Multiple slot processing.} \\
For one transmission slot, the following processing steps are performed:
\begin{enumerate}
\item {\bf Pilot sequence detection:} The BS detects which pilot sequences  are in use. 
This is done by correlating the received signal $\mathbf{Y}_p$ in  \eqref{eq:RecSig} with each sequence $\boldsymbol{\Phi}_j$ that is available: $\mathbf{Y}_p \boldsymbol{\Phi}_j^*$, $j=1,\ldots, \tau_p$. Pilot $j$ is detected if $\|\mathbf{Y}_p \boldsymbol{\Phi}_j^* \|$, $j=1,\ldots, \tau_p$ is larger than a pre-determined threshold and nondetected otherwise. {The pilot detection outcomes are buffered in order to be utilized for device activity detection, see Step 2.}
\item {\bf Channel estimation:} 
For each pilot sequence detected, a  scaled version of the  MMSE channel estimate is determined. This estimate is the same for all members of a contamination set. 
\item {\bf Multi-antenna processing:}  For each pilot sequence detected, Maximum Ratio Combining  (MRC) is applied to the data symbols  in the slot and the MRC  output is buffered along with its associated pilot index. 
\item  {\bf End of transmission:} The process is repeated until all active devices have stopped transmitting their codeword. End of transmission is jointly detected with the active device identification. 
\end{enumerate}
\noindent   {\bf Step 2: Active device identification.}  
From the pilot-hopping patterns detected across the multiple slots at Step 1, the
transmitting devices are detected.
\\[1mm]
\noindent  {\bf Step 3: Decoding.}  
Based on the identified pilot patterns, the BS identifies which MRC outputs to combine to decode the data of each transmitting device.  
For each identified device $0$, data decoding assumes  the knowledge of ${\beta_0}$. 
}

\subsection{MMSE Channel Estimation}
\label{sec:ChanEst}

The MMSE estimation of $\mathbf{g}_{[0]}$ proceeds first by correlating $ \mathbf{Y}_p$ with the pilot sequence $\boldsymbol{\Phi}_0$ to obtain
\begin{equation}
 \mathbf{y}_p = \mathbf{Y}_p  \boldsymbol{\Phi}_{[0]}^* = \sqrt{\tau_p  } \sum_{j \in \cC_0} \mathbf{g}_j  + \mathbf{N}_p  \boldsymbol{\Phi}_{[0]}^*.
 \label{eq:ts}
\end{equation}
The MMSE estimate of $\mathbf{g}_0$  is then \cite{Kay1993a}
\begin{equation}
\hat{\mathbf{g}}_{0}  = 
\left( \E [ \bfy_p \bfy_p^H ] \right)^{-1}{ \E [\bfh_0  \bfy_p^H ] } \; \mathbf{y}_p  = 
\frac{\sigma_{\bfh_0 \mathbf{y}_p}^2}{\sigma_{ \mathbf{y}_p \mathbf{y}_p}^2}\mathbf{y}_p ,
\label{eq:mmse}
\end{equation}
where $\E [\bfy_p \bfy_p^H] = \sigma^2_{\bfy_p \bfy_p}  \mathbf{I}_M$ and 
$\E [ \bfh_0  \bfy_p^H ]  =  \sigma^2_{\bfh_0  \bfy_p}  \mathbf{I}_M$, with
\begin{eqnarray}
\sigma^2_{\bfy_p \bfy_p} & = &{\tau_p  } \beta_0 + {\tau_p  }\sum_{j \in {\cal C}_0}^{ }  \beta_j  + 1, \\
 \sigma^2_{\bfh_0  \bfy_p} & = & \sqrt{\tau_p  } \beta_0.
\end{eqnarray}

The following results will be used in the computation of the performance bound in Section~\ref{sec:PerfBounds}. For $j \in {\cal C}_0$
\begin{equation}
\hat{\mathbf{g}}_j = \frac{\beta_j}{\beta_0} \hat{\mathbf{g}}_0, \quad 
\E [ \hat{\mathbf{g}}_0 \hat{\mathbf{g}}_0^H ]= 
\sigma_{ \hat{\mathbf{g}}_0}^2 \mathbf{I}_M, \quad 
\sigma_{ \hat{\mathbf{g}}_0}^2 = 
\frac{\sigma_{\bfh_0 \mathbf{y}_p}^4 }{\sigma_{\mathbf{y}_p \mathbf{y}_p}^2}
\label{eq:mmse3}
\end{equation}
\begin{align}
\hspace{-6mm}
\E [ \bm{\varepsilon}_0 \bm{\varepsilon}_0^H ] &= \sigma_{ \hat{\mathbf{g}}_0}^2 \mathbf{I}_M, 
\nonumber
\\
\sigma_{ \hat{\mathbf{g}}_0}^2 &= \beta_0 - \frac{\sigma_{\bfh_0 \mathbf{y}_p}^4 }{\sigma_{\mathbf{y}_p \mathbf{y}_p}^2} = 
 \frac{ {\tau_p  }\sum_{j \in {\cal C}_0} \beta_j  + 1 }{\sigma_{\mathbf{y}_0 \mathbf{y}_0}^2} .
\label{eq:mmse2}
\end{align}
where $\bm{\varepsilon}_0 = \hat{\mathbf{g}}_0 - \mathbf{g}_0$. 

We define the quantity $\hat{\mathbf{g}}_{0}'$ as: 
\begin{equation}
\hat{\mathbf{g}}_{0} =  \beta_0 \hat{\mathbf{g}}_{0}' , \quad 
\hat{\mathbf{g}}_{0}' =  \frac{\sqrt{\tau_p  } }{ {\tau_p  }\sum_{j \in \{0,  {\cal C}_0 \} }  \beta_j  + 1}
\label{eq:LS}
\end{equation}
We  use $\hat{\mathbf{g}}_{0}' $ as a basis for the MRC   operation at each transmission slot  (see Section~\ref{sec:ProtDesc}).   

{
The sum power $\sum_{j \in \{0, \cC_0\} } {\beta_j}$ can be estimated accurately as detailed in\cite{Bjornson2017a}: indeed, using channel hardening, as the number of antenna grows large, the term $\boldsymbol{\Phi}_j^T \mathbf{Y}_p^H   \mathbf{Y}_p \boldsymbol{\Phi}_j^*$ is asymptotically equal to $\sigma_{ \mathbf{y}_p \mathbf{y}_p}^2 = \tau_p \sum_{j \in \{0, \cC_0\} } {\beta_j} + 1$. }
Then, 
for each pilot sequence detected, the corresponding channel estimate $\hat{\mathbf{g}}_{0}' $ is computed using \eqref{eq:LS}.

{
\subsection{Active Device Identification}
\label{sec:DevID}
Each pseudorandom pilot-hopping pattern serves as a unique signature/identification for the device that applies that pattern. The device patterns are known a priori to the BS, such that the BS combines the pilot sequence detection outcomes from the slots that follow the pattern in order to detect the transmitting devices.
The pseudorandom pilot-hopping pattern detection follows the same principle as pseudorandom CDMA sequence detection. 
We assume that the number pilot hopping sequences is sufficiently large so that every user in the cell has a unique identifying sequence. 
}

\subsection{Decoding Procedure}
\label{sec:dec}

The MMSE channel estimate of device $0$ requires knowledge of  parameter $\beta_0$ which cannot be assigned to device $0$ until it is identified. 
To overcome this limitation, data decoding is performed as follows. At first, the BS obtains, for each transmission slot, the output of the MRC receiver based on the channel 
estimates $\hat{\mathbf{g}}_{0}' $ defined in \eqref{eq:LS} which are estimated at each transmission slot. 
A pilot sequence is linked to each of the MRC output at each transmission slot.  
MRC is attractive in massive MIMO due to its low computational complexity and asymptotic rejection of non-coherent multi-device interference and noise~\cite{Larsson2014a}. 
Next, the BS identifies a transmitting device through its identifying pseudo-random  sequence from which it can associate the relevant MRC outputs over the multiple transmission  slots. 
We assume that  the BS knows the associated $\beta_0$ parameter {(based on the initial access phase \cite{Dahlman:2011,Bjornson2017a})} and can estimate $\sum_{j \in \{0, {\cal C}_0\}}\beta_j $ 
based on equation (\ref{eq:ts}) (using channel hardening as in~\cite{Bjornson2017a}). 
The BS  performs soft combining of the MRC outputs scaled by $\beta_0$.


%

\begin{figure*}[t]
\begin{equation}
\resizebox{.9\hsize}{!}{$
\underline{\SINR}_1 ( \cC_0, K_a , \{\beta\} )= 
\frac{  \tau_p (M-1) \beta_0^2 }
{
  \tau_p (M-1) \sum_{j \in \mathcal{C}_0} \beta_j^2 
 + 
%
 \sum_{i \in  \{0, \cC_0  \} }  \beta_i (1 +  \tau_p \sum_{j \in {\cal C}_i}^{ }  \beta_j )
+ 
%
( 1+  \sum_{i \notin \{0, \cC_0 \}} \beta_i ) 
({1 +   \tau_p \sum_{i\in {\{0, {\cal C}_0} \}}^{ }  \beta_i } )}
$}
\label{eq:sinr1}
\end{equation}
\begin{eqnarray}
\resizebox{.75\hsize}{!}{
$
\underline{\SINR}_{2}  (c, K_a,\beta_0) = 
%
\frac{ \tau_p\left(M-1\right)  \beta_0^2 }{ 
%
 \tau_p  \left( M\!-1\right)   
\overline{\beta^2} 
c
%
+
%
  \beta_0  (1 +  \tau_p c\overline{\beta}  )  - c  \overline{\beta}^2  \tau_p 
+
%
   ( 1 + (K_a-1)\overline{\beta} )(   1  +  
 \beta_0    \tau_p   +  \tau_p c \overline{\beta}   ) 
%
}
$
}
\label{eq:sinr2}
\end{eqnarray}

\begin{eqnarray}
\resizebox{1\hsize}{!}{
$\underline{\SINR}_{3} (\beta_0) \!= \!
\frac{\tau_p\left(M-1\right)  \beta_0^2}{ 
%
\overline{\beta^2}
 \left( M\!-1 \right) 
 \left(  p_a K-1 \right) 
%
+
%
   \beta_0 (1 +  \overline{\beta}   \left(  p_a K-1 \right)  ) 
 -   \overline{\beta}^2  \left(  p_a K-1 \right) 
+
%
  ( 1 +  ( p_a K -1)\overline{\beta} )(  1 +  
 \beta_0  \tau_p     )  +
( p_a K -1)\overline{\beta}   +
 \overline{\beta}^2
 \left(  p_a^2 K (K-1) - (p_a K-1) \right)
%
}$
}
\label{eq:sinr3}
\end{eqnarray}
\hrule
\end{figure*}

\section{Uplink sum rate}
\label{sec:Bound1}

{
\subsection{Maximal Achievable Sum Rate and Main Assumptions}
\label{sec:MainAssump}
Within a transmission frame, the
codeword of a given device experiences all possible contamination
events from the $K_a$ active devices, provided that the number of
transmission slots  is sufficiently large. Likewise, for
an asymptotically large number of transmission slots, a device experiences 
an asymptotically large number of   additive noise  and fading realizations.
Under those asymptotic conditions, a maximal achievable rate per device
can be defined within each transmission frame.\\
Achieving this rate assumes the following features: 
\begin{enumerate}
\item The number of active devices within a transmission frame can be estimated at the BS.
\item The average channel energy $\beta_j$ per device $j$ can be estimated at the BS and at the device.
\item 
Under conditions 1) and 2), the BS can compute a maximal achievable rate per device. The BS broadcasts both
the channel energy  $\beta_j$ and the associated rate for each active device.
As the device itself knows its channel energy, it can associate
the corresponding rate.
\end{enumerate}
The different performance metrics presented in this paper correspond to the maximal achievable sum rate averaged over the activation probability of the devices. 
}

We present two types of bounds on the average maximal achievable sum rate. 
First, the main bound $\cR_1$ in (\ref{eq:cR1}) tightly approximates the maximal achievable sum rate and is used for performance assessment. 
It necessitates  Monte-Carlo simulations of the channel large scale fading realizations (i.e., $\beta$). 
Its numerical evaluation can be heavy as it involves the computation of expected values over binomial distributions with large number of devices $K$. 
A secondary  bound, $\cR_2$, does not require Monte-Carlo simulation but its tightness depends on the distribution of the parameters $\{\beta_j\}$. Other bounds  $\cR_3$ in \eqref{eq:cR3} and $\cR_a$ in \eqref{eq:cRa} are developed and serve as optimization tools. 
They are relatively loose, but analytically simple and follow the variations of the ergodic sum rate well.
Those bounds are the basis for the heuristic solutions presented in Section~\ref{sec:Bounds3}.

\subsection{Main Performance Bounds}
\label{sec:PerfBounds}

Here we present the main bound $\cR_1$  that is used for performance assessment. 
A secondary bound $\cR_2$ is also provided and mainly helps in establishing bound $\cR_3$ in Section~\ref{sec:Bounds2}.

\begin{theorem}[Main bound $\mathcal{R}_1$]
\label{theorem:first-theorem}
An approximation of the average maximal achievable sum rate (in bits per symbol) is
\begin{eqnarray}
\cR_1 = 
\!\!
 \sum_{K_a=1}^K  p(K_a) \, K_a  \sum_{c=0}^{K_a-1}   p(c|K_a) \;
\E_{\beta}   \left[  R_1 \left(\cC_0,  K_a , \{\beta\} \right)   \right] 
\label{eq:cR1}
\end{eqnarray}
where 
\begin{eqnarray}
 R_1  \left( \cC_0, K_a , \{\beta\}  \right)   =  \frac{\tau_u - \tau_p}{\tau_u}  \log_2 \big(1+ \underline{\SINR}_1( \cC_0, K_a , \{\beta\} ) \big) 
\label{eq:R1}
\end{eqnarray}
is a lower bound on the maximal achievable rate of device $0$ conditioned on a collider set with indices $\cC_0$ and $K_a$ active devices. The expectation is taken with respect to $\beta_j, j\in \{0, \cC_0\}$, denoted as  $\{\beta\}$. $\underline{\SINR}_1(\cC_0, K_a , \{\beta\})$ is given by (\ref{eq:sinr1}) at the top of the next page.

%
Recall that $p(c|K_a)$ is the probability of having $c$ colliders to device $0$ when $K_a$ devices are active and is given in Equation~\eqref{eq:pcKa}. 
$p(K_a)$ is the probability of having $K_a$ active devices out of $K$ and is given in Equation~\eqref{eq:pKa}. 
\end{theorem}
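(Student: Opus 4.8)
The plan is to establish the conditional per-device bound $\underline{\SINR}_1$ of \eqref{eq:sinr1} by a use-and-then-forget (UatF) argument, and then to assemble \eqref{eq:cR1} by averaging over the random access and fading. First I would fix the collider set $\cC_0$, the number of active devices $K_a$, and the large-scale coefficients $\{\beta\}$. During the data phase the BS receives $\bfy=\sum_j\bfg_j s_j+\bfn$, with the sum over the active devices and $\bfn\sim\cC\cN(\mathbf{0},\mathbf{I}_M)$; applying MRC with the scaled estimate $\hat{\mathbf{g}}_0'$ of \eqref{eq:LS} produces the soft symbol $(\hat{\mathbf{g}}_0')^H\bfy$. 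I would split this into the mean effective gain $\E[(\hat{\mathbf{g}}_0')^H\bfg_0]\,s_0$, which carries the desired symbol, and a zero-mean residual collecting the gain fluctuation, the interference from the other active devices, and the noise. Because the residual is uncorrelated with the desired term, the standard worst-case-Gaussian argument certifies that $\frac{\tau_u-\tau_p}{\tau_u}\log_2(1+\underline{\SINR}_1)$ is achievable, where $\underline{\SINR}_1$ is the ratio of desired power to total residual power and the prelog is the data fraction of the slot after the $\tau_p$ pilot symbols.

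The technical core is the evaluation of the second-order quantities entering $\underline{\SINR}_1$. Using $\bfg_0=\hat{\mathbf{g}}_0-\bm{\varepsilon}_0$ with $\bm{\varepsilon}_0\perp\hat{\mathbf{g}}_0$ and the moments of \eqref{eq:mmse3}--\eqref{eq:mmse2}, I would compute: (i) the desired power $|\E[(\hat{\mathbf{g}}_0')^H\bfg_0]|^2$, coherent and built from $\E[\|\hat{\mathbf{g}}_0\|^2]=M\sigma_{\hat{\mathbf{g}}_0}^2$; (ii) the self-interference from the estimation error and the gain fluctuation, via $\mathrm{Var}[(\hat{\mathbf{g}}_0')^H\bfg_0]$ and the independence of $\bm{\varepsilon}_0$; (iii) the collider (pilot-contamination) power, where the structural identity $\hat{\mathbf{g}}_j=\frac{\beta_j}{\beta_0}\hat{\mathbf{g}}_0$ of \eqref{eq:mmse3} makes each $j\in\cC_0$ coherent with the combiner, so its power involves the fourth moment $\E[\|\hat{\mathbf{g}}_0\|^4]=M(M+1)\sigma_{\hat{\mathbf{g}}_0}^4$ and scales as $M^2$; (iv) the non-collider interference, incoherent since $\bfg_j\perp\hat{\mathbf{g}}_0'$ for $j\notin\{0,\cC_0\}$, hence equal to $\beta_j\,\E[\|\hat{\mathbf{g}}_0'\|^2]$; and (v) the noise power $\E[\|\hat{\mathbf{g}}_0'\|^2]$. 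The coherent terms (i) and (iii) are the ones that survive the large-$M$ limit; carrying the exact constants through $\E[\|\hat{\mathbf{g}}_0\|^4]$ together with the mean-subtraction in the variance bookkeeping yields the matching factors $\tau_p(M-1)\beta_0^2$ and $\tau_p(M-1)\sum_{j\in\cC_0}\beta_j^2$.

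Dividing numerator and denominator by the common factor $M\tau_p/\sigma_{\bfy_p\bfy_p}^4$ then puts the quantities into the exact form \eqref{eq:sinr1}: the estimation-error contributions of device $0$ and of the colliders collapse to $\sum_{i\in\{0,\cC_0\}}\beta_i(1+\tau_p\sum_{j\in\cC_i}\beta_j)$ once the per-device error variances are rewritten through $\sigma_{\bfy_p\bfy_p}^2=1+\tau_p\sum_{i\in\{0,\cC_0\}}\beta_i$, while the noise and non-collider terms both carry the combiner-norm factor $\sigma_{\bfy_p\bfy_p}^2$ and combine into $(1+\sum_{i\notin\{0,\cC_0\}}\beta_i)(1+\tau_p\sum_{i\in\{0,\cC_0\}}\beta_i)$. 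With the conditional rate $R_1(\cC_0,K_a,\{\beta\})$ of \eqref{eq:R1} in hand, I would finish by averaging: over the fading via $\E_\beta[\cdot]$; over the number of colliders $c$ via $p(c\mid K_a)$ of \eqref{eq:pcKa}; over the number of active devices $K_a$ via $p(K_a)$ of \eqref{eq:pKa}; and, since the $K_a$ active devices are statistically exchangeable, by multiplying the per-device rate by $K_a$ to obtain the sum rate, which reproduces \eqref{eq:cR1}.

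I expect the main obstacle to be step (iii): correctly separating the collider contribution into its coherent $M^2$ part and its incoherent $M$ part and carrying the exact constants through the fourth-moment computation, since this is where the pilot-contamination floor and the precise $(M-1)$ coefficient originate and where an error would silently corrupt the large-$M$ scaling. A secondary subtlety, which is the source of the word ``approximation'' in the statement, is that $\hat{\mathbf{g}}_0'$ relies on the power sum $\sum_{j\in\{0,\cC_0\}}\beta_j$ that is only asymptotically exact through channel hardening, and that the per-frame ergodic reading presumes enough slots for the collision and fading events to average out.
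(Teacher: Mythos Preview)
Your overall strategy is sound and close to the paper's, but the bounding technique you invoke for $\underline{\SINR}_1$ is \emph{not} the one that produces \eqref{eq:sinr1}, and the constants you claim to recover will not come out of it. You use the ``use-and-then-forget'' (hardening) bound, in which the receiver only knows $\E[(\hat{\mathbf g}_0')^H\bfg_0]$ and the fluctuation $\mathrm{Var}[(\hat{\mathbf g}_0')^H\bfg_0]$ is thrown into the interference. The paper instead follows \cite{Ngo2013a}: the receiver \emph{knows} $\hat{\mathbf g}_0$, so the useful term is $\|\hat{\mathbf g}_0\|^4$, and the lower bound comes from Jensen on $\log_2(1+1/x)$, i.e.\ averaging the \emph{inverse} SINR over the small-scale fading. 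After dividing numerator and denominator by $\|\hat{\mathbf g}_0\|^4$, the collider term cancels exactly (since $\hat{\mathbf g}_j=\tfrac{\beta_j}{\beta_0}\hat{\mathbf g}_0$) while the incoherent terms pick up $\E[\|\hat{\mathbf g}_0\|^{-2}]=\big((M-1)\sigma_{\hat{\mathbf g}_0}^2\big)^{-1}$; this inverse chi-squared expectation is precisely where the $(M-1)$ in \eqref{eq:sinr1} originates. Your UatF computation instead produces $M$ in the numerator and $M(M+1)$ in the collider term via $\E[\|\hat{\mathbf g}_0\|^4]$, plus an extra $\sigma_{\hat{\mathbf g}_0}^2$ of self-interference; this is a perfectly valid achievable SINR but it is not \eqref{eq:sinr1}, and no amount of ``mean-subtraction bookkeeping'' will convert one into the other. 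If you want to reproduce \eqref{eq:sinr1} exactly, switch to the Jensen argument; if you stick with UatF, you will prove a different (slightly weaker, asymptotically equivalent) bound.

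On the averaging, the paper does a bit more than exchangeability: it computes the per-device rate of device~$0$ conditioned on being active, uses $p_a\,p(K_a{-}1\mid K{-}1)=\tfrac{K_a}{K}p(K_a)$, and then \emph{sums over the $K$ realized $\beta_k$'s}. The passage from this empirical sum to $\E_\beta[\cdot]$ is an approximation valid for large $K$ (the $\beta_k$ are i.i.d.), and this is exactly the ``approximation'' flagged in the theorem statement. Your sketch mentions slot-averaging and hardening as the sources of approximation, but does not identify this finite-$K$ law-of-large-numbers step; you should add it.
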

\begin{proof}
See  Appendix~\ref{sec:proof1}.
\end{proof}

The computation of the sum rate  involves averaging w.r.t. the specific realization of the $\beta$'s during the whole transmission frame and computing quantities related to all possible sets of different size among the $K$ devices 
(See  Appendix~\ref{sec:proof1}). 
As it is prohibitively complex, we instead compute an 
approximation \eqref{eq:cR1} where the expected value w.r.t. the discrete probability functions is
replaced by an expected value w.r.t. continuous probability functions. As $K$ becomes large, the approximation becomes very good. The results show that $\cR_1$ approaches a limit when $K \to \infty$. 

Furthermore, after a certain threshold on $K$,  $\cR_1$  does not increase significantly. This can be understood as 
the tail of the binomial distribution becomes very small for large values of $K$ and brings the associated expected value of  term
$K_a  \sum_{c=0}^{K_a-1}   p(c|K_a) \;
\E_{\beta}   \left[  R_1 \left( \{\beta\} , \cC_0|K_a \right)   \right]$ in \eqref{eq:cR1} to zero. 
This result also means that after this threshold, the individual rate per device diminishes. 

Under the assumptions described in Section~\ref{sec:MainAssump}, 
 device $0$ computes  
its transmission rate as
\begin{eqnarray}
\cR_0 = 
\!\!
\frac{1}{M} \sum_{K_a=1}^K  p(K_a) K_ a \,  \sum_{c=0}^{K_a-1}   p(c|K_a) \;
\E_{\{\beta\} \setminus \beta_0}   \left[  R_1 \left(\cC_0 , K_a  ,\{\beta\}  \right)   \right]
\label{eq:cR12}
\end{eqnarray}
where $\E_{\{\beta\} \setminus \beta_0}$ designates the expectation w.r.t. $\{\beta_k, k\neq 0\}$.

\subsection{Secondary Performance Bounds}

The next bound  $\cR_2$ is a secondary bound  that is used as a basis for the bounds in Section~\ref{sec:Bounds2}. 
Using  Jensen's inequality on $\log_2(1+1/x)$, a lower bound is obtained by taking the expected value of the denominator in (\ref{eq:sinr2}) w.r.t. all sets of contaminators to device $0$. 
The main advantage of $\cR_2$ compared with $\cR_1$ is that the equivalent SINR, $\underline{\SINR}_{2}(\beta_0)$, in (\ref{eq:sinr2})  can now be explicitly written as a function of the number of contaminators to device $0$ and number of active devices. 
%

\begin{theorem}[Secondary bound: $\mathcal{R}_2$]
A lower bound on  $\cR_{1}$  (in bits per symbols) is
\begin{eqnarray}
\cR_{2}=  
\sum_{K_a=1}^K  p(K_a) \, K_a\sum_{c=0}^{K_a-1}   p(c|K_a)   
\E_{\beta_0} [ R_{2}(c, K_a , \beta_0) ]
\label{eq:cR2}
\end{eqnarray}
where
\begin{eqnarray}
R_{2}(c, K_a , \beta_0 ) =  \frac{\tau_u - \tau_p}{\tau_u}  \log_2(1+ \underline{\SINR}_{2}(c, K_a , \beta_0))
\label{eq:R2}
\end{eqnarray} 
and $\underline{\SINR}_2(c, K_a , \beta_0)$ is shown in (\ref{eq:sinr2}) at top of the next page. 
$p(c|K_a)$ and   $p(K_a)$ are defined in~\eqref{eq:pKa} and \eqref{eq:pcKa}.  
\end{theorem}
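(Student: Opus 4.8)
The plan is to obtain $\mathcal{R}_2$ from $\mathcal{R}_1$ by a single, term-wise application of Jensen's inequality. The outer weights $p(K_a)$ and $p(c\mid K_a)$ and the prefactor $(\tau_u-\tau_p)/\tau_u$ are identical and nonnegative in \eqref{eq:cR1} and \eqref{eq:cR2}, so it is enough to prove, for each fixed $K_a$ and collider count $c=|\mathcal{C}_0|$, that
\[
\E_{\beta}\!\left[\log_2\!\big(1+\underline{\SINR}_1(\mathcal{C}_0,K_a,\{\beta\})\big)\right]\;\ge\;\E_{\beta_0}\!\left[\log_2\!\big(1+\underline{\SINR}_2(c,K_a,\beta_0)\big)\right],
\]
after which summing against the common weights yields $\mathcal{R}_2\le\mathcal{R}_1$.

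First I would condition on $\beta_0$ and treat the numerator of $\underline{\SINR}_1$, namely $a:=\tau_p(M-1)\beta_0^2>0$, as fixed; all remaining randomness then resides in the denominator $D_1$ of \eqref{eq:sinr1}, which aggregates the contamination, interference, and noise contributions. Writing the integrand as $\log_2(1+a/D_1)$ and using that $y\mapsto\log_2(1+a/y)$ is strictly convex on $(0,\infty)$ (its second derivative is positive), Jensen's inequality applied to the expectation over the channel energies of all interfering devices gives
\[
\E_{\{\beta\}\setminus\beta_0}\!\left[\log_2\!\big(1+a/D_1\big)\right]\;\ge\;\log_2\!\big(1+a/\E[D_1]\big).
\]
Taking the outer expectation $\E_{\beta_0}$ and invoking the tower property reduces the claim to the single identity $\E[D_1]=D_2$, where $D_2$ is the denominator of $\underline{\SINR}_2$ in \eqref{eq:sinr2}.

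The second step, and the bulk of the work, is evaluating $\E[D_1]$. I would partition the $K_a-1$ active interferers into the $c$ colliders in $\mathcal{C}_0$ and the $K_a-1-c$ non-colliding devices, and compute each sum using $\E[\beta_j]=\overline{\beta}$, $\E[\beta_j^2]=\overline{\beta^2}$, mutual independence across devices, and $\E[\beta_i\beta_j]=\overline{\beta}^{\,2}$ for $i\ne j$. The one structural input is that a collider $i\in\mathcal{C}_0$ shares device $0$'s pilot, so its own contamination set is $\mathcal{C}_i=\{0\}\cup(\mathcal{C}_0\setminus\{i\})$; this is what generates the quadratic-in-$c$ terms and, through the separation of diagonal ($\E[\beta_i^2]$) from off-diagonal ($\E[\beta_i\beta_j]$) contributions, the correction $-c\,\overline{\beta}^{\,2}\tau_p$ that appears in \eqref{eq:sinr2}. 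Collecting the three groups of terms reproduces $D_2$ exactly.

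I expect the main obstacle to be this expectation bookkeeping rather than the Jensen step, which is routine once the convex variable $D_1$ is identified and the numerator is frozen by conditioning on $\beta_0$. The delicate points are keeping the colliders-of-colliders structure $\mathcal{C}_i$ consistent and correctly distinguishing $\E[\beta_i^2]$ from $\E[\beta_i\beta_j]$, since an error there would alter the $\overline{\beta}^{\,2}\tau_p$ and $\overline{\beta^2}$ coefficients and destroy the identity $\E[D_1]=D_2$ on which the whole bound rests.
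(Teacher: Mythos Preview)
Your proposal is correct and matches the paper's approach: the paper simply states that $\mathcal{R}_2$ is obtained from $\mathcal{R}_1$ by applying Jensen's inequality to the convex map $x\mapsto\log_2(1+1/x)$, pushing the expectation over the interferers' $\beta$-values into the denominator of \eqref{eq:sinr1}. Your write-up is in fact more detailed than the paper's one-line justification, in that you make explicit the conditioning on $\beta_0$, the collider-of-collider structure $\mathcal{C}_i=\{0\}\cup(\mathcal{C}_0\setminus\{i\})$, and the bookkeeping that produces the $-c\,\overline{\beta}^{\,2}\tau_p$ correction; these are exactly the points needed to verify $\E[D_1]=D_2$ and recover \eqref{eq:sinr2}.
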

In (\ref{eq:cR2}), we maintain the expected value w.r.t. $\beta_0$  outside of the $\log$ term, 
as its impact is not easily captured through simplifying tools like Jensen's inequality, if  
the $\beta$'s have a large variance across the device population. 
Note that $\underline{\SINR}_{2}(c, K_a , \beta_0))$ is a fraction depending on $\beta_0$ and $\beta_0^2$ and its integral w.r.t. 
$\beta_0$ can be written as a closed form expression, so that  
$\E_{\beta_0} [ \log_2(1+ \underline{\SINR}_{2}(c, K_a , \beta_0)) ]$ can be written in closed form for Model 1, i.e., a uniform distribution of $\beta$. 

\section{Bounds for Optimization and Performance Analysis}
\label{sec:Bounds2}

We now determine the values of the activation probability $p_a$ and the pilot sequence length $\tau_p$ that maximize the performance bound $\cR_1$. As the numerical evaluation of $\cR_1$ is heavy for large values of $K$, we use 
two lower  bounds on $\cR_1$ that are the main bounds used for optimization as well as performance analysis. Such an optimization assumes the knowledge of the distribution of the  $\beta$'s. 
As $p_a K$ is in fact the optimization parameter, the knowledge of the total number of devices $K$ is not necessary for the optimization itself, but it is necessary to decide the reliable rate for each device. 

We mostly treat the case where $\tau_p$ is not linked to an actual channel, meaning that we do not integrate in the optimization problem that the constraint $\tau_p$ should be smaller than the coherence time or is limited by the transmit power of the devices. The main results from the latter case are mentioned in Appendix~\ref{sec:App2}.
Furthermore, as mentioned in Section~\ref{sec:Models}, we assume that a device always has data to transmit at activation time. 
To summarize, the optimization problem is: 
\begin{equation}
\max_{0\leq \tau_p \leq \tau_u , 0 \leq p_a \leq 1}
\cR_o(\tau_p, p_a).
\label{eq:optG}
\end{equation}
$\cR_o$ denotes generically the different cost functions considered in Sections~\ref{sec:Bound1},~\ref{sec:Bounds2} and~\ref{sec:Bounds3}.

In the next bound $\cR_3$, the expectation is taken in the denominator of (\ref{eq:sinr2}) w.r.t.~the distribution of the number of contaminators and the number of active devices.
The bound $\cR_3$ is relatively loose as compared to $\cR_1$  and $\cR_2$, since it averages over the number of colliders and active devices in the interference variances.
However, $\cR_3$  follows very well the  variations of $\cR_1$ and provides very good optimization results provided that the variance of the parameter $\beta$ is not too large. 
This aspect is highlighted in the numerical results section.

\begin{theorem}[Bound for optimization: $\mathcal{R}_3$]
A lower bound on $\cR_1$  (in bits per symbol) is
\label{thm:cR3}
\begin{eqnarray}
\cR_{3} = \frac{\tau_u - \tau_p}{\tau_u}  \,  p_a K \, \E_{\beta_0} [ \log_2(1+ \underline{\SINR}_{3}(\beta_0)) ]
\label{eq:cR3}
\end{eqnarray}
where $\underline{\SINR}_3(\beta_0)$ is given in \eqref{eq:sinr3} at the top of the next page.
\end{theorem}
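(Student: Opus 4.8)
The plan is to derive $\cR_3$ as a further relaxation of the secondary bound $\cR_2$, so that the established inequality $\cR_2 \le \cR_1$ (the secondary bound) is inherited through the chain $\cR_3 \le \cR_2 \le \cR_1$. Starting from $\cR_2 = \frac{\tau_u-\tau_p}{\tau_u}\sum_{K_a} K_a\,p(K_a)\sum_c p(c\mid K_a)\,\E_{\beta_0}\!\big[\log_2(1+\underline{\SINR}_2)\big]$, I would write $\underline{\SINR}_2 = A/D_2$ with numerator $A = \tau_p(M-1)\beta_0^2$ independent of $c$ and $K_a$, and denominator $D_2 = D_2(c,K_a,\beta_0)$ read off from (\ref{eq:sinr2}). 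The leading $K_a$ plays the role of a device-count prefactor, since $\sum_{K_a} K_a\,p(K_a) = p_a K$; the remaining log term is the rate of a generic active device, averaged over the number of colliders and the number of co-active devices.

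The inequality direction comes from Jensen applied to the map $x \mapsto \log_2(1 + A/x)$, which is convex on $x>0$ because its second derivative $\frac{1}{\ln 2}\big(x^{-2} - (x+A)^{-2}\big)$ is positive for $A>0$. Averaging the denominator then only lowers the bound: $\E\big[\log_2(1+A/D_2)\big] \ge \log_2\!\big(1 + A/\E[D_2]\big)$, which is exactly the operation announced before the statement, pushing the averaging over the collider and active-device counts into the denominator of the SINR. With the prefactor separated as $p_a K$, this yields $\cR_3 = \frac{\tau_u-\tau_p}{\tau_u}\,p_a K\,\E_{\beta_0}\!\big[\log_2(1+A/\E[D_2])\big]$, so it remains to identify $\E[D_2]$ with the denominator of $\underline{\SINR}_3$ in (\ref{eq:sinr3}).

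That algebraic identity is the computational heart. I would compute the required moments in order: the conditional collider mean $\E[c\mid K_a] = (K_a-1)/\tau_p$ (binomial on $K_a-1$ trials with success $1/\tau_p$), then $\E[K_a-1] = p_a K - 1$. The denominator $D_2$ is affine in $c$ and in $K_a-1$ except for the cross term hidden in the product $(1+(K_a-1)\overline{\beta})(1+\beta_0\tau_p+\tau_p c\,\overline{\beta})$, which contributes $\tau_p\overline{\beta}^2(K_a-1)c$. Taking its expectation gives $\tau_p\overline{\beta}^2\,\E[(K_a-1)c] = \overline{\beta}^2\,\E[(K_a-1)^2]$, and using the unweighted binomial moments one has $\E[(K_a-1)^2] = p_a^2 K(K-1) - (p_a K -1)$. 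This is precisely the footprint of the term $\overline{\beta}^2\big(p_a^2 K(K-1) - (p_a K -1)\big)$ in (\ref{eq:sinr3}), confirming that the averaging has been carried out correctly and that a second-moment contribution of the active-device count survives.

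The main obstacle I anticipate is not Jensen but reconciling the device-counting factor $K_a$ with the $K_a$-dependence living inside the SINR. A naive single Jensen step using the $K_a$-weighted measure $K_a\,p(K_a)p(c\mid K_a)/(p_a K)$ would raise every binomial moment by one order and produce $\E[K_a(K_a-1)^2]$ in the denominator, which does not match the second-moment structure of $D_3$. The displayed bound instead corresponds to treating the leading $K_a$ purely as a prefactor evaluated at $\E[K_a]=p_a K$, while performing the denominator averaging under the plain law $p(K_a)p(c\mid K_a)$. Making this separation rigorous, or equivalently stating the large-$K$ regime in which identifying the prefactor with $p_a K$ and the in-log moments with the unweighted binomial factorial moments becomes exact, is the delicate point; once it is settled, collecting the remaining affine terms into the exact grouping of (\ref{eq:sinr3}) is routine.
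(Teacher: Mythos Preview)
Your approach is essentially the paper's: Jensen on $x\mapsto\log_2(1+A/x)$ pushes the averages over $c$ and $K_a$ into the denominator of $\underline{\SINR}_2$, and the prefactor is set to $p_aK$. The paper's proof is a three-line sketch that computes $\E[c\mid K_a]=(K_a-1)/\tau_p$ and $\E[K_a]=p_aK$, declares the resulting SINR a lower bound on the per-device rate, and then ``multiplies by the average number of devices $p_aK$.'' Your term-by-term identification of the denominator of $\underline{\SINR}_3$, including $\E[(K_a-1)^2]=p_a^2K(K-1)-(p_aK-1)$ for the cross term, is exactly the algebra the paper leaves implicit.

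The obstacle you flag---that the leading $K_a$ in $\cR_2$ induces the \emph{size-biased} law $K_a\,p(K_a)/(p_aK)$ on the active-device count, so that a single Jensen step would produce $\E[K_a(K_a-1)]$ and $\E[K_a(K_a-1)^2]$ in the denominator rather than the unweighted moments appearing in $\underline{\SINR}_3$---is real, and the paper does not address it. Its proof simply separates ``lower bound on the ergodic rate of a single device'' (Jensen under the plain law $p(K_a)p(c\mid K_a)$) from ``multiply by $p_aK$,'' without reconciling this with the $K_a$-weighting already present in the sum. So you have not missed a trick; you have spotted a gap that the paper itself leaves open. In the regime $p_aK\gg1$ the size-biased and plain moments differ only at relative order $1/(p_aK)$, which is consistent with the paper's later asymptotic use of $\cR_3$; but as a strict lower bound on $\cR_1$ for finite $K$, the statement is not fully justified by the paper's argument either.
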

\begin{proof}
In the denominator of $\underline{\SINR}_2$, we take the expected value w.r.t. the probability mass of the binomial distribution $p(c) = p(K_a)p(c|K_a)$. 
More specifically, we take first the expected value of $c$ conditioned on a number of active devices $K_a$. It is the average number of contaminators to one given device and is equal to $(K_a -1 )/\tau_p$. Then, we take the expected value w.r.t. $K_a$, i.e., the average number of active devices out of $K$ devices which is equal to $p_a K$. Hence, 
$\E[c]= (p_a K-1)/\tau_p$. 
We obtain a lower bound on the ergodic  rate of a single device. Next, we multiply by the average number of devices $p_a K$.
\end{proof}

From (\ref{eq:sinr3}), a simpler bound can be deduced given asymptotic conditions on the parameters for which we keep the equivalent expression of $\underline{\SINR}_{3}(\beta_0)$. 
\begin{namedtheorem}[Asymptotic bound: $\mathcal{R}_a$]
If $K\gg 1$, $\tau_u \gg 1$,  $M \gg 1$, $\tau_p \gg 1$ and $p_a K \gg 1$, 
a lower bound on the ergodic sum rate is
\begin{eqnarray}
\cR_{a} = p_a K \frac{\tau_u - \tau_p}{\tau_u} E_{\beta_0}\!\left[ \log_2(1+ \underline{\SINR}_{a}(\beta_0)) \right],
\label{eq:cRa}
\end{eqnarray}
\begin{eqnarray}  
\underline{\SINR}_{a}(\beta_0) =     
         \frac{M \tau_p \beta_0^2 }{
        \overline{\beta^2}  M   p_a K +
                \overline{\beta}^2 p_a^2 K^2     +
         \overline{\beta} \beta_0 p_a K  \tau_p }.
\label{eq:sinra}
\end{eqnarray}
\end{namedtheorem}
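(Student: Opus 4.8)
The plan is to obtain $\cR_a$ directly from the optimization bound $\cR_3$ of Theorem~\ref{thm:cR3}, which is already certified as a lower bound on $\cR_1$. Since the prefactor $\frac{\tau_u-\tau_p}{\tau_u}$, the multiplicative factor $p_a K$, and the outer expectation $\E_{\beta_0}[\cdot]$ are identical in \eqref{eq:cR3} and \eqref{eq:cRa}, the whole argument reduces to showing that $\underline{\SINR}_3(\beta_0)$ in \eqref{eq:sinr3} collapses to $\underline{\SINR}_a(\beta_0)$ in \eqref{eq:sinra} under the regime $K\gg1$, $M\gg1$, $\tau_p\gg1$, $p_a K\gg1$. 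Monotonicity of $\log_2(1+\cdot)$ then transfers the SINR comparison to the rate.

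First I would treat the numerator: under $M\gg1$ the factor $(M-1)$ is replaced by $M$, so $\tau_p(M-1)\beta_0^2$ becomes $M\tau_p\beta_0^2$. Next I would expand the denominator of \eqref{eq:sinr3} into monomials, in particular multiplying out the cross term $(1+(p_aK-1)\overline{\beta})(1+\beta_0\tau_p)$, and collect terms by scaling order in the large parameters. Three dominant contributions survive: the term $\overline{\beta^2}(M-1)(p_aK-1)$, which is $\Theta(\overline{\beta^2}\,M\,p_aK)$; the term $\overline{\beta}^2 p_a^2 K(K-1)$, which is $\Theta(\overline{\beta}^2 p_a^2 K^2)$; and the leading piece of the expanded cross term, $(p_aK-1)\overline{\beta}\beta_0\tau_p$, which is $\Theta(\overline{\beta}\,\beta_0\,p_aK\,\tau_p)$. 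Applying $M-1\to M$, $p_aK-1\to p_aK$, and $K-1\to K$ to these three pieces yields exactly the denominator of \eqref{eq:sinra}.

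The remaining denominator terms must then be shown to be asymptotically negligible. The pieces $\beta_0$, $1$, $\beta_0\tau_p$ and $(p_aK-1)\overline{\beta}$ are individually $O(1)$, $O(\tau_p)$, or $O(p_aK)$, hence dominated by each of the three retained terms, which each carry an additional large factor ($M$, $p_aK$, or $\tau_p$ against the $p_aK$-scaling); the term $\beta_0\overline{\beta}(p_aK-1)$ is $O(p_aK)$ and is dominated by the $\Theta(p_aK\tau_p)$ cross term via $\tau_p\gg1$; and the negative $O(p_aK)$ corrections proportional to $\overline{\beta}^2(p_aK-1)$ are dominated by the positive $\Theta(p_a^2K^2)$ term via $p_aK\gg1$. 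Keeping only the leading order in each parameter then gives $\underline{\SINR}_a(\beta_0)$.

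I expect the main obstacle to be the direction of the bound rather than the arithmetic. Inflating the numerator $(M-1)\to M$ raises the SINR, and naively dropping the neglected positive denominator terms would raise it as well, so one must argue that all such perturbations are of strictly lower order than the retained denominator terms and therefore do not destroy the lower-bound property in the asymptotic regime. The cleanest route is to establish $\underline{\SINR}_a(\beta_0)/\underline{\SINR}_3(\beta_0)\to1$, so that $\cR_a$ and $\cR_3$ coincide to leading order while $\cR_3\le\cR_1$; retaining the negative corrections on the conservative side certifies that $\cR_a$ remains a genuine lower bound up to vanishing relative error.
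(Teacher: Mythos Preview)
Your approach is correct and essentially identical to the paper's: the paper derives $\cR_a$ directly from $\cR_3$ by keeping only the leading-order terms of $\underline{\SINR}_3(\beta_0)$ in \eqref{eq:sinr3} under the stated asymptotics, which is precisely your plan. Your term-by-term identification of the three dominant denominator contributions and your handling of the $(M-1)\to M$, $(p_aK-1)\to p_aK$, $(K-1)\to K$ replacements match the simplification the paper performs, and your caveat about the direction of the inequality (resolved via $\underline{\SINR}_a/\underline{\SINR}_3\to 1$) is in fact more careful than the paper, which treats the asymptotic bound as a leading-order equivalent of $\cR_3$ without further comment.
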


\vspace{2mm}
The expression of $\underline{\SINR}_{a}(\beta_0)$ allows for an easy identification of the different terms and their effect. For that purpose, 
we rewrite $\underline{\SINR}_{a}(\beta_0)$  as
\begin{eqnarray}  
\frac{1}{\underline{\SINR}_{a}(\beta_0)}=     
        \frac{\overline{\beta^2}    p_a  K} { \tau_p \beta_0^2} +
                \frac{\overline{\beta}^2 p_a^2 K^2} {M \tau_p \beta_0^2}     +
          \frac{\overline{\beta} \beta_0 p_a K   } {M \beta_0^2} .
\label{eq:sinra2}
\end{eqnarray}
The first term in the sum comes from interference induced by pilot collision. 
The second and third terms come from the residual multi-device interference left by MRC.

We illustrate the variations of $\cR_a$ in Figure~\ref{fig:Ra_2D} as a function of $\tau_p$ and $p_a K$, for $M=\tau_u=100$ and no variations across the $\beta$'s (or perfect power control). 
For fixed $p_a K$, 
the sum rate first increases as $\tau_p$ increases  since the probability of pilot collision decreases. Hence, the first and second interference terms decrease. 
Then, the rate penalty $\frac{\tau_u - \tau_p}{\tau_u}$ coming from longer pilot sequences results in a sum rate decrease. 

For a fixed number of pilot sequences $\tau_p$, the sum rate first increases through the pre-log term: the number of devices increases faster than the SINR decreases from increased interference. 
When the decrease of the SINR becomes too large, the sum rate decreases. 
This decrease comes from the multi-device interference term $\frac{\overline{\beta}^2 p_a^2 K^2} {M \tau_p \beta_0^2}$.  
This last observation will be the basis of heuristic solution \eqref{eq:heur1} in Section~\ref{sec:Bounds3}.
\begin{figure}[t]
\centering
\includegraphics[width=8cm]{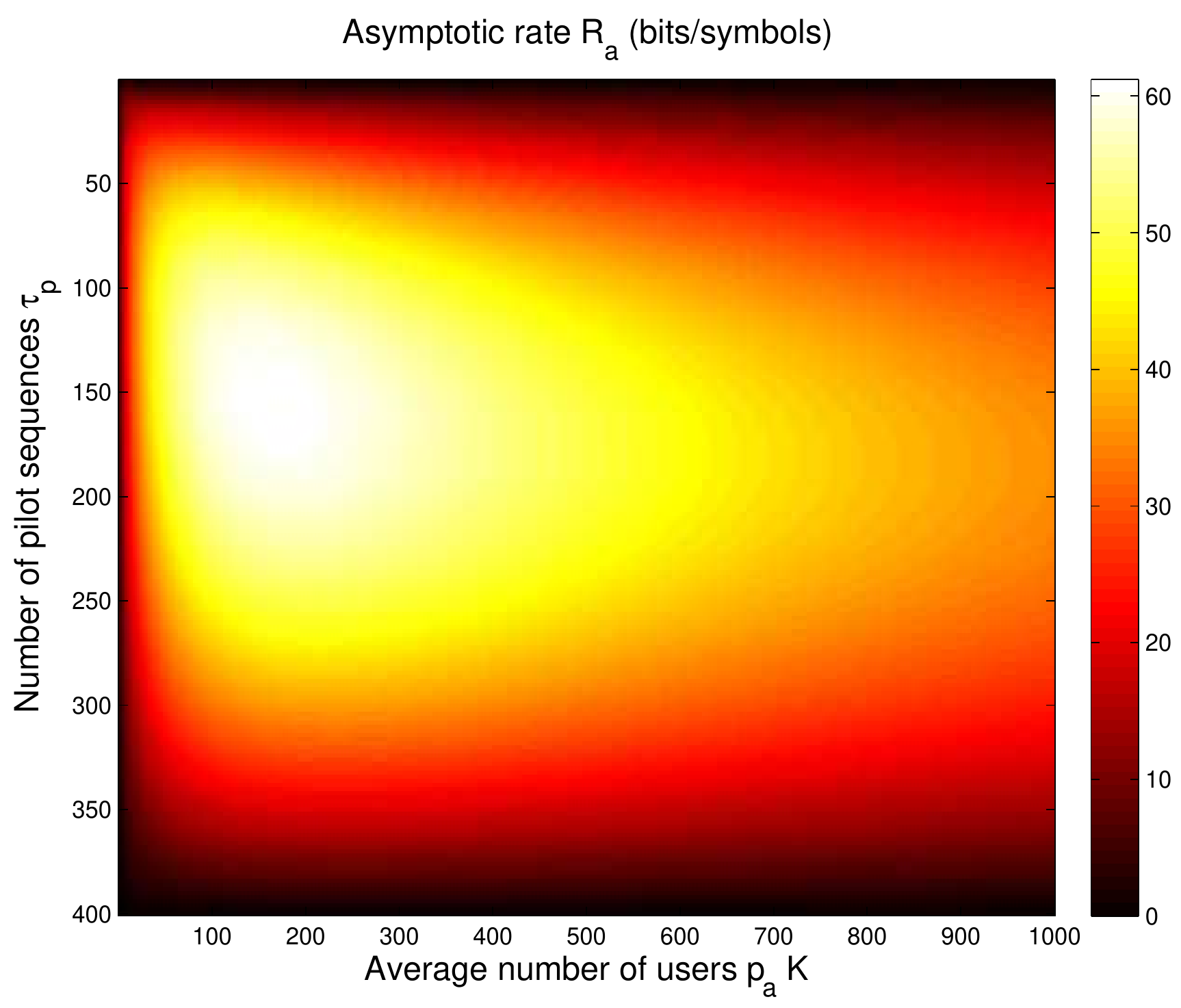}
\caption{$\cR_a$  as a function of $\tau_p$ and $p_a K$ for $M=\tau_u=400$.}
\label{fig:Ra_2D}
\end{figure}

\section{Heuristic Solutions}
\label{sec:Bounds3}

In this section, we present heuristic solutions to our optimization problem. 
To simplify the notations, we employ the superscript $()^o$ to denote the solutions of the optimization problem regardless of the optimization cost function.

\subsection{Small variations of $\{\beta\}$ }

Consider Model 1 and Model 3 in Section~\ref{sec:Models} with $\alpha \ll 1$. 
A Taylor approximation of the $\log$ term in \eqref{eq:cRa} is of the form 
$\log_2\left( 1+ \underline{\SINR}_{a}( {\beta}_0 ) \right)
=
\log_2(1+ \underline{\SINR}_{a}( \overline{\beta} ) ) + c_1 \alpha + O( \alpha^2)$, 
where $c_1$ is a scalar involving all the parameters except $\alpha$. 
Therefore,  $E_{\beta_0} [ \log_2(1+ \underline{\SINR}_{a}( {\beta}_0 ) ] = \log_2(1+ \underline{\SINR}_{a}( \overline{\beta} ) ) + O(\alpha^3)$. The same kind of derivation also holds for Model 3. 
We note that the approximation error is of order $\alpha^3$, which explains why the approximation is quite robust to values of $\alpha$ that are not strictly speaking ``small'', as seen in the numerical results. 
For Model 1, we found that the performance and optimization are quite insensitive to large values of $\alpha$. For Model 3, a value of $\alpha=0.25$ results in optimized performance that is close to the ones with perfect power control ($\alpha=0$), while, for Model~2, this value is $\alpha=0.1$.  

For small variance in $\beta$'s, the approximate cost function is
\begin{eqnarray}
\cR_{a} \approx p_a K \frac{\tau_u - \tau_p}{\tau_u} \log_2(1+ \underline{\SINR}_{a}(\overline{\beta})).
\label{eq:cRa_approx}
\end{eqnarray}

In Section~\ref{sec:Bounds2}, we noted that the variations of $\cR_a$   depend on the term in $p_a^2 K^2$ in the denominator of $\underline{\SINR}_a$. 
 The heuristic solution presented next is obtained based on \eqref{eq:cRa_approx} where only the term in $p_a^2 K^2$  is kept.
 
\begin{namedtheorem}[Heuristic Solution 1]
\label{thm:heuristic1}
A heuristic solution to the optimization of parameters  $\tau_p$ and $p_a$ maximizes the modified cost function
\begin{eqnarray}
\cR_{h}^{[0]} = p_a K \frac{\tau_u - \tau_p}{\tau_u}
\log_2 \left (1+  \frac{M \tau_p }{
                p_a^2 K^2     } \right).
\end{eqnarray}
The solution is 
\begin{eqnarray}   
 {\tau_p^o}= \frac{\tau_u}{3}\quad \quad
 p_a^o K = \frac{\sqrt{\tau_u M}}{\sqrt{3 s_0}}
 \label{eq:heur1}
\end{eqnarray}
where $s_0\approx 3.92$ is the solution of $\log(1+x) =2 \frac{x}{1+x}$. 
\end{namedtheorem}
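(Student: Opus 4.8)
The plan is to treat this as a two–variable calculus problem and to exploit a decoupling that makes the optimum separable. Writing $x = p_a K$ and $t = \tau_p$, and discarding the positive constant $1/(\tau_u \ln 2)$, which does not affect the maximizer, I would reduce the task of maximizing $\cR_{h}^{[0]}$ to maximizing
\begin{equation}
g(x,t) = x\,(\tau_u - t)\,\ln\!\left(1 + \frac{M t}{x^2}\right)
\end{equation}
over $0 < t < \tau_u$ and $x > 0$. The strategy is a nested maximization: first optimize over $x$ for fixed $t$, then over $t$.

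For the inner problem I would fix $t$ and set $u = Mt/x^2$. Differentiating $x\ln(1+u)$ in $x$ and using $\partial u/\partial x = -2u/x$ yields the stationarity condition
\begin{equation}
\ln(1+u) = \frac{2u}{1+u},
\end{equation}
which is precisely the defining equation for $s_0$. The crucial structural observation is that this condition pins the ratio $u = Mt/x^2$ at the constant $s_0$ \emph{regardless of} $t$, so the inner optimizer is $x^\ast(t) = \sqrt{Mt/s_0}$. I would confirm this is a genuine maximum rather than a saddle by noting that the inner objective $x\ln(1+Mt/x^2)$ tends to $0$ both as $x\to 0^+$ and as $x\to\infty$ while being strictly positive in between, so its unique interior critical point must be the maximizer.

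Substituting $x^\ast(t)$ back collapses the logarithm to the constant $\ln(1+s_0) = 2s_0/(1+s_0)$, so the reduced objective becomes proportional to $(\tau_u - t)\sqrt{t}$. Maximizing this one–dimensional function is elementary: its derivative is a positive multiple of $(\tau_u - 3t)\,t^{-1/2}$, giving the stationary point $t = \tau_u/3$, again the unique interior maximum since $(\tau_u-t)\sqrt{t}$ vanishes at both endpoints $t=0$ and $t=\tau_u$. Setting $\tau_p^o = \tau_u/3$ in $x^\ast(t)$ then delivers $p_a^o K = \sqrt{M\tau_u/(3 s_0)}$, matching \eqref{eq:heur1}.

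The main obstacle, and the only genuinely non–routine point, is handling the transcendental equation for $s_0$: I must show it has a unique positive root. Defining $\phi(s) = \ln(1+s) - 2s/(1+s)$, one finds $\phi(0)=0$ and $\phi'(s) = (s-1)/(1+s)^2$, so $\phi$ decreases on $(0,1)$ and increases on $(1,\infty)$ with $\phi(s)\to +\infty$; hence $\phi$ has exactly one positive zero $s_0 > 1$, which is found numerically to be $s_0 \approx 3.92$. Everything else is straightforward differentiation and back–substitution.
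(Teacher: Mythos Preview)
Your proof is correct and follows essentially the same route as the paper: both reduce to the first-order conditions $\ln(1+X)=2X/(1+X)$ (fixing $X=s_0$) and then $\tau_p=\tau_u/3$, with the paper setting the two partial derivatives to zero simultaneously while you organize the same computation as a nested (inner--outer) optimization. Your version is in fact slightly more complete, since you verify that the critical point is a maximum and that $s_0$ is the unique positive root, points the paper leaves implicit.
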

\begin{proof}
We look for the expression of $\tau_p$ and $p_a K$ maximizing the following rate function: 
\vspace{-2mm}    
\begin{eqnarray}  
R^h =  p_a K (\tau_u - \tau_p)
 \log_2(1+X),  \quad
X =   \frac{M \tau_p }{
                \overline{\beta}^2 \overline{\beta^{-2}} p_a^2 K^2    }.
\end{eqnarray}

\vspace{-2mm}
\noindent 
The partial derivative of $R^h$ are
\vspace{-2mm}
\begin{eqnarray}     
\left\{ 
\begin{array}{lcl}
\frac{\partial R^h}{\partial  \tau_p} & =&
-  
\log_2(1+X) 
+
(\tau_u - \tau_p)
\frac{\partial X}{\partial  \tau_p}
\frac{1}{1+X},  \\
\frac{\partial R^h}{\partial p_a} & =& \log_2(1+X) 
+ p_a 
\frac{\partial X}{\partial  p_a}
\frac{1}{1+X}.
\end{array}
\right.
\end{eqnarray}

\vspace{-1mm}
\noindent
Noting that 
$
p_a  \frac{\partial X}{\partial  p_a} = 
-2 {X} $ and 
$
 \frac{\partial X}{\partial \tau_p} = 
\frac{1}{\tau_p} {X}
$, we obtain

\vspace{-3mm}
\begin{eqnarray}     
\left\{ 
\begin{array}{l}
1+\frac{1+X}{X} \log(1+X) =
 \frac{\tau_u}{\tau_p} , \\
\frac{1+X}{X}   \log(1+X) 
=2. \\
\end{array}
\right.
\end{eqnarray}
\vspace{-2mm}
From those equations, we obtain (\ref{eq:heur1}).
\end{proof}
An advantage of \eqref{eq:cRa_approx} is that it does not depend on the statistics of the $\beta$'s, such that they need not to be known.  


\subsection{Large variations of $\{\beta\}$}

For large values of $\alpha$,  the performance of the solution previously described degrades. We observe that the optimal solution involves a larger average number of active devices $p_a K$ while $\tau_u/3$ remains a good approximation for $\tau_p$. 
In the following heuristic solutions, we set $\tau_p^o = \tau_u/3$, while $p_a K$ is optimized based on $\cR_a$ or heuristic $\cR_h^{[0]}$. The superscript $1D$ below indicates that the optimization is now one dimensional. 

\begin{namedtheorem}[Asymptotic 1D]
\label{thm:asymp1D}
A heuristic solution to the optimization of parameters  $\tau_p$ and $p_a$ is
\begin{eqnarray}   
 {\tau_p^o}= \frac{\tau_u}{3}\quad \quad
 p_a^o K = b^o \sqrt{\tau_u M}
\end{eqnarray}
where  $b^o$ maximizes the sum rate $\cR_a$ where $\tau_p$ is set to $\tau_u/3$:
\begin{equation}
\cR_{a}^{[1D]}\!  =\!  b \,
\E_{\beta_0}\log_2 \! \left (\! 1+  
  \frac{ \sqrt{M \tau_u} {\beta}_0^2 /3 }{
       b  \overline{\beta^2}  M  +
               b^2 \overline{\beta}^2 \sqrt{M \tau_u}     +
        b  \overline{\beta}  {\beta}_0  \tau_u/3 }
\! \right)
\end{equation}
$b^o$ depends on $M$, $\tau_u$ and  the statistics of the $\beta$'s. 
\end{namedtheorem}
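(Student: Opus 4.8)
The plan is to start from the asymptotic bound $\cR_a$ of \eqref{eq:cRa}--\eqref{eq:sinra} and reduce the two-dimensional optimization over $(\tau_p,p_a)$ to a one-dimensional search. First I would freeze the pilot length at $\tau_p^o=\tau_u/3$. This value is not re-derived here: it is inherited from Heuristic Solution~1, where it is the exact maximizer of the modified cost in the small-variation regime, and it is retained because the $O(\alpha^3)$ robustness estimate preceding \eqref{eq:cRa_approx}, together with the numerical evidence, indicates that $\tau_u/3$ stays near-optimal even for large spread of the $\beta_j$'s. With $\tau_p$ fixed, the only remaining freedom is $p_a$, which I would parametrize through the scaling law $p_aK=b\sqrt{\tau_u M}$ identified in Heuristic Solution~1 (there corresponding to the specific value $b=1/\sqrt{3s_0}$); the dimensionless coefficient $b$ is then optimized against the \emph{full} $\cR_a$, i.e.\ keeping $\E_{\beta_0}[\cdot]$ outside the logarithm rather than collapsing it as in \eqref{eq:cRa_approx}.

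The core step is a direct substitution of $\tau_p=\tau_u/3$ and $p_aK=b\sqrt{\tau_u M}$ into \eqref{eq:cRa}--\eqref{eq:sinra}, followed by a normalization. The prelog factor becomes $\frac{\tau_u-\tau_p}{\tau_u}\,p_aK=\frac{2}{3}\,b\sqrt{\tau_u M}$. For the SINR I would divide numerator and denominator of \eqref{eq:sinra} by $\sqrt{M\tau_u}$: the numerator $\tfrac{1}{3}M\tau_u\beta_0^2$ becomes $\tfrac{1}{3}\sqrt{M\tau_u}\,\beta_0^2$, and the three denominator terms collapse to $b\,\overline{\beta^2}M$, $b^2\overline{\beta}^2\sqrt{M\tau_u}$ and $\tfrac{1}{3}b\,\overline{\beta}\,\beta_0\,\tau_u$, reproducing exactly the fraction displayed in the statement. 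Pulling $b$ out of the prelog and discarding the $b$-independent positive factor $\tfrac{2}{3}\sqrt{\tau_u M}$ leaves $\cR_a^{[1D]}=b\,\E_{\beta_0}\log_2(1+\cdots)$.

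Since the discarded prefactor is strictly positive and independent of $b$, maximizing $\cR_a$ along this curve is equivalent to maximizing $\cR_a^{[1D]}$, so $b^o:=\arg\max_b\cR_a^{[1D]}(b)$ yields the announced $p_a^oK=b^o\sqrt{\tau_u M}$. The dependence of $b^o$ on $M$, $\tau_u$ and the statistics of the $\beta$'s follows because these enter $\cR_a^{[1D]}$ through $\overline{\beta}$, $\overline{\beta^2}$ and the outer expectation over $\beta_0$, which no longer cancels as it did in the small-variation approximation. I would not attempt a closed form for $b^o$: the stationarity condition $\partial_b\cR_a^{[1D]}=0$ does not separate cleanly once $\E_{\beta_0}$ is kept outside the log, so $b^o$ is obtained numerically.

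The main obstacle is not the algebra, which is a routine substitution, but the \emph{justification of the ansatz}: there is no guarantee that the true two-dimensional maximizer of $\cR_a$ lies on the one-parameter curve $\{\tau_p=\tau_u/3,\ p_aK=b\sqrt{\tau_u M}\}$, which is precisely why the statement is labelled a heuristic rather than a theorem. A fully rigorous version would require either bounding the loss from fixing $\tau_p=\tau_u/3$ for arbitrary $\beta$-spread, or showing that the optimal $\tau_p$ and the $\sqrt{\tau_u M}$ scaling of $p_aK$ persist beyond the small-$\alpha$ regime in which they were derived; here that gap is bridged only by the $O(\alpha^3)$ robustness argument and the numerical results.
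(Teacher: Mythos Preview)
Your proposal is correct and mirrors exactly what the paper does implicitly: the statement is presented in the paper without proof, as a heuristic ansatz, and your substitution of $\tau_p=\tau_u/3$ and $p_aK=b\sqrt{\tau_u M}$ into \eqref{eq:cRa}--\eqref{eq:sinra} followed by the normalization by $\sqrt{M\tau_u}$ is precisely the algebra that yields the displayed $\cR_a^{[1D]}$. Your caveat about the heuristic nature of fixing $\tau_p=\tau_u/3$ is also in line with the paper, which justifies this choice only through the robustness argument and the numerical evidence in Section~\ref{sec:numerical}.
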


\begin{namedtheorem}[Heuristic Solution 2]
\label{thm:heuristic2}
A heuristic solution to the optimization of parameters  $\tau_p$ and $p_a$ is
\begin{eqnarray}   
 {\tau_p^o}= \frac{\tau_u}{3}\quad \quad
 p_a^o K = b^o \sqrt{\tau_u M}
\end{eqnarray}
where $b^o$ maximizes the heuristic cost function \eqref{eq:heur1} where  $\tau_p$ is set to $\tau_u/3$:
\begin{eqnarray}
\cR_{h}^{[1D]} = b \;
\E_{\beta_0}  \left[ \log_2 \left (1+  \frac{ \beta_0^2}{
                3 \overline{\beta} b^2} \right) \right]
\label{eq:heur2}
\end{eqnarray}
$b^o$ does not depend on $M$ or $\tau_u$ but solely on  the statistics of the $\beta$'s. 
\end{namedtheorem}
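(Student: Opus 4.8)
The plan is to derive \eqref{eq:heur2} purely by specialization: start from the $\beta_0$-dependent heuristic cost used for Heuristic Solution~1, reduce the problem to one dimension by freezing $\tau_p$, and then impose the scaling ansatz $p_a K = b\sqrt{\tau_u M}$ so that all dependence on $M$ and $\tau_u$ factors out of the maximization over $b$. No new inequality is needed; the content of the claim is the \emph{invariance} of the maximizer, which will follow from a clean cancellation.

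Concretely, I would first retain, in the denominator of $\underline{\SINR}_a$ of \eqref{eq:sinra}, only the multi-device term proportional to $p_a^2 K^2$, exactly as in the cost $\cR_h^{[0]}$, but --- and this is the sole structural departure from Heuristic~1 --- keep the numerator factor $\beta_0^2$ and leave $\E_{\beta_0}$ \emph{outside} the logarithm rather than replacing $\beta_0$ by $\overline{\beta}$ through Jensen's inequality. This is the modification appropriate to large $\alpha$, where the spread of the $\beta$'s is too large for the collapse $\beta_0 \to \overline{\beta}$ to be accurate. Next I would fix $\tau_p = \tau_u/3$, the optimum found in Heuristic Solution~1 and observed in Section~\ref{sec:Bounds3} to stay near-optimal for large $\alpha$; this turns \eqref{eq:optG} into a one-dimensional problem in $p_a K$. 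Substituting $\tau_p = \tau_u/3$ and $p_a K = b\sqrt{\tau_u M}$, the pre-log factor becomes $\tfrac{2}{3}\, b\sqrt{\tau_u M}$ since $\frac{\tau_u-\tau_p}{\tau_u}=\tfrac{2}{3}$, while in the retained SINR the numerator product $M\tau_p$ cancels the $\tau_u M$ coming from $p_a^2 K^2 = b^2 \tau_u M$, leaving the purely $b$- and $\beta_0$-dependent argument that appears inside the logarithm of \eqref{eq:heur2}. Because the leftover prefactor $\tfrac{2}{3}\sqrt{\tau_u M}$ multiplies the entire objective, it cannot shift the location of the maximum; discarding it yields exactly $\cR_h^{[1D]}$ in \eqref{eq:heur2}.

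From this form the conclusion is immediate: neither the argument of the logarithm nor the $b$-dependent prefactor contains $M$ or $\tau_u$, so $b^o = \arg\max_b \cR_h^{[1D]}$ is determined solely by the distribution of the $\beta$'s (through $\overline{\beta}$ and $\E_{\beta_0}[\cdot]$), as claimed. I expect the main point requiring care to be not the algebra but the \emph{justification for keeping only the $p_a^2 K^2$ interference term}: under the scaling $p_a K = b\sqrt{\tau_u M}$ with $M$ and $\tau_u$ comparable, all three denominator terms of $\underline{\SINR}_a$ are of the same order, so this reduction cannot be defended by strict dominance. I would therefore present it as a modeling approximation motivated by the observation in Section~\ref{sec:Bounds2} that it is the $p_a^2 K^2$ term that governs the \emph{variation} of $\cR_a$ with $p_a K$ (and hence the trade-off that fixes the optimum), and close by noting that the resulting $b^o$ can be checked a posteriori against the full bound $\cR_a$ numerically, as in Section~\ref{sec:numerical}.
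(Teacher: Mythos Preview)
The paper states this heuristic without proof, and your derivation is exactly the specialization the surrounding text intends: keep only the $p_a^2K^2$ denominator term of $\underline{\SINR}_a$ while retaining $\beta_0^2$ in the numerator and $\E_{\beta_0}$ outside the $\log$, freeze $\tau_p=\tau_u/3$ from Heuristic~1, substitute $p_a K=b\sqrt{\tau_u M}$, and observe that the prefactor $\tfrac{2}{3}\sqrt{\tau_u M}$ is the only place $M,\tau_u$ survive, so the maximizer $b^o$ depends only on the law of $\beta_0$. One arithmetic remark: carrying the substitution through literally gives $3\,\overline{\beta}^{2}\,b^{2}$ in the denominator inside the $\log$, not $3\,\overline{\beta}\,b^{2}$ as printed in \eqref{eq:heur2}; this looks like a typo in the paper rather than a gap in your reasoning, and it leaves the invariance claim untouched since the extra factor is a constant in $b$, $M$, and $\tau_u$.
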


%

\section{Sum Rate Scaling Laws}
\label{sec:Scale}


In this section, we give a summary of the scaling laws of the optimized sum rate and optimized parameters based on $\cR_a$ and \eqref{eq:optG}. 
As seen through numerical results, $\cR_a$-based optimization gives results that are close to direct optimization based on $\cR_1$, 
provided that the variance of the $\beta$'s is not very large.
Hence such an analysis makes sense under such conditions. 
The details can be found in Appendix~\ref{sec:App2}. 

\begin{namedtheorem}[Scaling Laws]
\label{thm:ScalingLaws}
Assuming $\tau_u \gg 1$,  $M \gg 1$, 
the scaling laws of the optimized sum rate $\cR_a$ and optimized parameters  $\tau_p^o$ and  $p_a^o K$ are summarized as follows. 
Parameter $\tau_p$ is not constrained by the channel coherence time. 

\noindent \textbf{Case 1:} $M \gg \tau_u$:
\begin{eqnarray}
\hspace{-5mm}
\tau_p^o \approx \frac{\tau_u}{2} + O \! \left(\! \tau_u \sqrt{\frac{\tau_u}{M}} \, \right) \\
 p_a^o K \approx \sqrt{\frac{ \overline{\beta^4}}{\overline{\beta}^2 \overline{\beta^2}}}  \frac{1}{2} \sqrt{M \tau_u}  + O(\tau_u) \\
 \cR_a^o \approx \frac{\tau_u}{4 \log(2)} + O \! \left( \!  \tau_u \sqrt{\frac{\tau_u}{M}} \right)\\
  \underline{\SINR}_a \approx \sqrt{\frac{\tau_u}{M}}+ O \! \left(  {\frac{\tau_u}{M}} \right)
\end{eqnarray}

\noindent \textbf{Case 2:} $M \ll \tau_u$: 
\begin{eqnarray}
\hspace{-5mm}
\tau_p^o  = \left( \frac{M}{2}\right)^{2/3} \!\! \tau_u^{1/3} + o(\tau_u), \\
p_a^o K =  \sqrt{  \frac{ \overline{\beta^4}}{\overline{\beta}^2 \overline{\beta^2}}} 
\left({\frac{M}{2 \tau_u}} \right)^{5/6}\!\!  + O(M) \\
 \cR_a^o = 
 M 
 + 
 O \left(
  M \left[   \frac{ M}{\tau_u} \right]^{2/3} 
 \right)
\\
 \underline{\SINR}_a = 
 2^{1/3} \left({\frac{M}{\tau_u}} \right)^{1/6}
 +
O\left(   \left[ {\frac{M}{\tau_u}} \right]^{1/3}  \right)
\end{eqnarray}

\vspace{2mm}
\noindent \textbf{Case 3:} $M \sim \tau_u$: 
 \begin{eqnarray}
 \tau_p^o = a \tau_u,  \quad p_a^o K = b \sqrt{M \tau_u},  
\quad \quad  a=O(1), b=O(1) \\
 \cR_a^o = O(\sqrt{M \tau_u}),   \quad \underline{\SINR}_a = O(1)
\end{eqnarray}

\end{namedtheorem}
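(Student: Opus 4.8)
The plan is to attack the maximization \eqref{eq:optG} for the cost $\cR_a$ of \eqref{eq:cRa} directly, viewing it as a smooth function of $\tau_p$ and of the mean number of active devices $g=p_aK$. The organizing principle is the decomposition \eqref{eq:sinra2}: the inverse SINR is a sum of a pilot-collision term $\propto g/\tau_p$, a first residual-interference term $\propto g^2/(M\tau_p)$, and a second residual term $\propto g/M$, which I abbreviate $T_1,T_2,T_3$, so that $1/S=T_1+T_2+T_3$ for the SINR $S$, with fractions $q_i=T_i/(T_1+T_2+T_3)$ and $q_1+q_2+q_3=1$. Using the homogeneity identities for $g\,\partial_g(1/S)$ and $\tau_p\,\partial_{\tau_p}(1/S)$ exactly as in the proof of Heuristic Solution~1, the two stationarity conditions $\partial_g\cR_a=0$ and $\partial_{\tau_p}\cR_a=0$ collapse (in the small-variance case, or after the expectation step below) to the compact pair
\begin{equation}
\ln(1+S)=\frac{S}{1+S}\,(1+q_2), \qquad \frac{\tau_u-\tau_p}{\tau_p}=\frac{1+q_2}{1-q_3},
\end{equation}
with $S$ the optimal SINR. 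The first equation ties $S$ to the residual fraction $q_2$, linearizing to $S\approx 2q_2$ whenever $S\to0$; the second fixes the pilot fraction through $q_3$.

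Next I would dispose of the expectation $\E_{\beta_0}[\cdot]$. Since $S=\underline{\SINR}_a(\beta_0)\to0$ in Cases~1 and~2, I expand $\log_2(1+S)\approx(S-\tfrac12 S^2)/\ln 2$ and take the expectation, producing $\E[S]$ and $\E[S^2]$. The leading part of $S$ is proportional to $\beta_0^2$ when $T_1$ dominates and to $\beta_0$ when $T_3$ dominates, so these expectations introduce moments of $\beta_0$; the optimal $g$ emerges from balancing the residual penalty carried by $\E[S]$ against the log-curvature penalty $\tfrac12\E[S^2]$, and the particular moment combination (such as $\overline{\beta^4}/(\overline{\beta}^2\,\overline{\beta^2})$) is dictated by which term dominates.

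The regimes then reduce to identifying the dominant $T_i$. In Case~1 ($M\gg\tau_u$) the collision term $T_1$ dominates, so $q_3\to0$ and the ratio equation gives $\tau_p^o\to\tau_u/2$; combining $S\approx2q_2$ with $S\approx1/T_1$ and the moment balance yields $p_a^oK\approx\tfrac12\sqrt{\overline{\beta^4}/(\overline{\beta}^2\overline{\beta^2})}\sqrt{M\tau_u}$ and $S\approx\sqrt{\tau_u/M}$, whence $\cR_a^o\approx\tau_u/(4\ln2)$ by back-substitution. In Case~2 ($M\ll\tau_u$) the second residual term $T_3$ dominates, $q_3\to1$ forces $\tau_p^o\ll\tau_u$ with $\tau_p^o\sim(M/2)^{2/3}\tau_u^{1/3}$, and the same balance delivers the stated $g$, SINR, and a rate of order $M$. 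Case~3 ($M\sim\tau_u$) is the balanced regime where all $T_i$ are comparable and $S=O(1)$, so the low-SINR expansion is dropped; substituting the ansatz $\tau_p=a\tau_u$, $g=b\sqrt{M\tau_u}$ turns the two stationarity conditions into a fixed $O(1)$ system for $(a,b)$, giving $\cR_a^o=O(\sqrt{M\tau_u})$ and $S=O(1)$.

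The main obstacle is a degeneracy: to leading order $\cR_a$ is insensitive to $g$, because the prefactor $g$ cancels the $1/g$ decay of the dominant-term SINR. Both the optimal $g$ and its error term are therefore pinned only at the next order, forcing me to carry the second-order term of the log expansion together with the leading subdominant interference term and to verify that the correction retained is genuinely the largest one. This bookkeeping differs across the three regimes---the dominant term being $T_1$, $T_3$, and none, respectively---and is precisely what controls the orders of the error terms, for instance the $O(\tau_u\sqrt{\tau_u/M})$ corrections in Case~1.
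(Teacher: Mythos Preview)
Your proposal is correct and takes a more unified route than the paper. The paper's Appendix~\ref{sec:App2} proceeds by brute-force case analysis: within each relationship between $M$ and $\tau_u$ it separately examines the subregimes $p_aK\gg\tau_p$, $p_aK\ll\tau_p$, $p_aK\sim\tau_p$ (and further splits on $M$ versus $p_aK$), drops the subdominant denominator terms of $\underline{\SINR}_a$ in each, Taylor-expands the log, differentiates, and then checks a posteriori both that the resulting critical point lies in the assumed subregime and that the other subregimes yield smaller $\cR_a$. You instead keep all three interference terms and encode the two first-order conditions compactly through the fractions $q_i$, so that the regime analysis reduces to reading off which $q_i$ tend to $0$ or $1$; this is cleaner and makes transparent \emph{why} $\tau_p^o\to\tau_u/2$ in Case~1 (because $q_3\to0$) and $\tau_p^o\ll\tau_u$ in Case~2 (because $q_3\to1$). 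What the paper's approach buys is an explicit global-optimality check across subregimes: your stationarity pair locates critical points but does not by itself rule out boundary maxima or a second critical point in a different balance, so you should either add that verification or appeal to the unimodality visible in Figure~\ref{fig:Ra_2D}. One further caution: your compact equations hold pointwise in $\beta_0$, while the $q_i$ depend on $\beta_0$ (since $T_3\propto1/\beta_0$ but $T_1,T_2\propto1/\beta_0^2$), so the expectation does not pass through them directly; the paper instead Taylor-expands first and takes $\E_{\beta_0}$ on the polynomial terms before differentiating (see \eqref{eq:Ra_approx}), which is how the moment ratio $\overline{\beta^4}/(\overline{\beta}^2\,\overline{\beta^2})$ actually emerges. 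Your ``moment balance'' description is the right idea, but the execution has to go through that expansion rather than through the $q_i$ equations themselves.
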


In Case 1, the average number of active devices  and the sum rate is limited by $\tau_u$. 
In Case 2, it is  is limited by $M$. 
In Case 3, the optimal number of pilot sequences and average number of active devices  become comparable. 
In Case 1 and Case 2, the rate of each device becomes asymptotically small but the average number of active devices that the system can accommodate grows faster. 
In Case 3, the rate of each device becomes constant while the average number of active devices increases. The quality of service requirement should dictate the choice of $M$ and $\tau_u$. 

{
In Case 1 and Case 2, we notice that the optimization results depend on the distribution of the parameters $\beta$ through the factor 
$\frac{ \overline{\beta^4}}{\overline{\beta}^2 \overline{\beta^2}}$. 
In Model 1, this factor has a small value range even for a wide range of values of $\alpha$. Hence, for this model, the optimization will not significantly depend on the distribution of the $\beta$'s. 
On the other hand, for Model 2 and Model 3, the factor $\frac{ \overline{\beta^4}}{\overline{\beta}^2 \overline{\beta^2}}$ varies significantly with varying values of $\sigma_v^2$ and $\alpha$ and so do the optimized values. For Case 3, we notice the same kind of behavior via simulations. Note that all those points are detailed in the simulation section. 
}

\section{Numerical Results}
\label{sec:numerical}

In this section we illustrate the  performance of the random pilot and data access protocol at optimized value of $\tau_p$ and $p_a K$. Due to the computational limitations in evaluating $\cR_1$, the value of $K$ is not as large as could be expected in an mMTC scenario. However, the value of $K$ is chosen so that  $\cR_1$ is close to its limit when $K$ grows to infinity (see Section~\ref{sec:Bound1}) so that our results are still representative. 
Furthermore, $\tau_u$ and 
$M$ are chosen to be of the same order. 

The curves show three different metrics as a function of the transmission slot duration  $\tau_u$ and  for  a fixed number of antennas $M$. 
The metrics are evaluated using the different optimization methods described in the previous sections. 
The metrics are: (1) Rate $\cR_1$ in bits per symbols; (2) The optimal number of pilot sequences; (3) The optimal activation probability through the quantity $p_a^o K$, i.e., the average number of active devices. 
For easy reference, the different optimization methods are summarized here: 
\begin{enumerate}
\item $\cR_1$-opt: direct optimization based on the main bound $\cR_1$.  
\item $\cR_3$-opt: optimization based on  bound $\cR_3$.  
\item $\cR_a$-opt: optimization based on  asymptotic bound $\cR_a$.  
\item $\cR_a^{[1D]}$-opt: $\tau_p^o$ is fixed to $\tau_u/3$,  $p_a^o K$  determined from  $\cR_a$. 
\item $\cR_h^{[0]}$-opt: $\tau_p^o=\tau_u/3$, $p_a^o K = \sqrt{\tau_u M} / \sqrt{ 3  s_0}$ in (\ref{eq:heur1}). 
\item $\cR_h^{[1D]}$-opt: $\tau_p^o=\tau_u/3$, $p_a^o K$ is based on the heuristic rate function  $\cR_h^{[1D]}$ in (\ref{eq:heur2}). 
\end{enumerate}
In all the figures, except Figure~\ref{fig:dB025_M400}, the number of antennas is $M=100$. 
We show performance for perfect power control and different values of the  variance of the $\beta$'s. 
For Model~1, we show only one curve as performance is quite insensitive to the variance of the $\beta$'s. 
Model~2 and Model~3 exhibit the same kind of behavior and we only show performance for one of the models to illustrate one particular performance aspect. 

\subsection{Optimization of Main Bound $\cR_1$}

We first show performance results with direct optimization of $\cR_1$. 
{Our main purpose is to show the variations of the optimized points w.r.t. the distribution of the $\beta$'s and w.r.t. to total number of devices $K$. Those points are highlighted in Section~\ref{sec:PerfBounds} and \ref{sec:Scale}. }
The heavy computations involved in the numerical evaluations have imposed a limit on the number of Monte-Carlo runs on parameters $\{\beta_j\}$  in \eqref{eq:cR1}. Hence, some curves fluctuate but still indicate significant performance results. 

In Figure~\ref{fig:R1opt_1}, for Model 1, the optimal sum rate is shown for the case of perfect power control ($\alpha=0$) and $\alpha=0.25$,  $\alpha=0.5$
for different values of the total number of devices $K$. We see that  performance is insensitive to the values of parameters $\alpha$ and  $K$. The optimized values of $\tau_p$ and $p_a K$ are also approximately the same. 
\begin{figure}
\centering
\includegraphics[width=9.1cm]{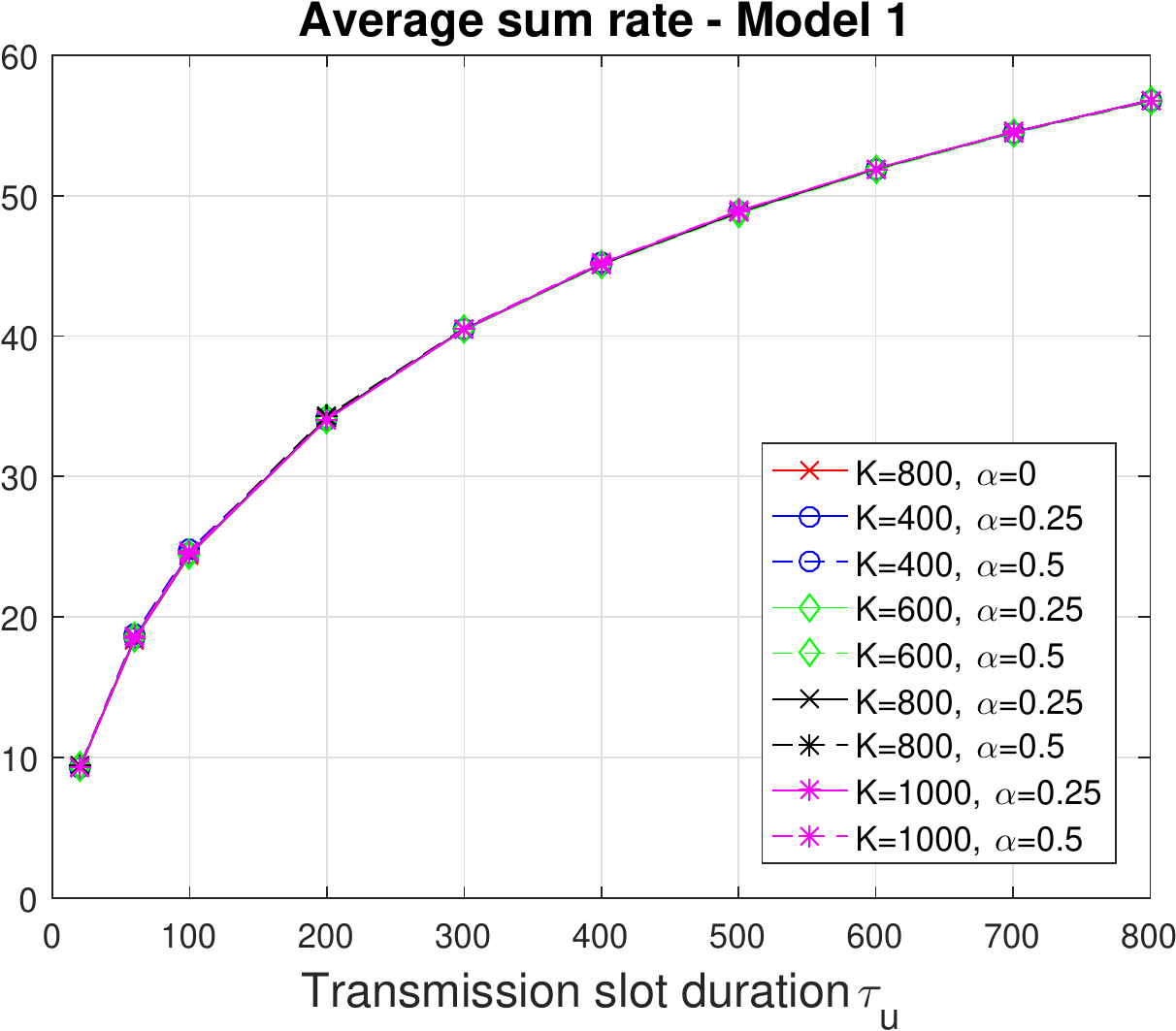}
\caption{Model 1. Optimal  sum rate $\cR_1$, for $M=100$, $K = \{400, 600, 800, 1000\}$, $\alpha=\{0, 0.25, 0.5\}$.}
\label{fig:R1opt_1}
\end{figure}

In Figure~\ref{fig:R1opt_2}, for Model 2, the optimal sum rate is shown for $\sigma_v^2=0$ and $\sigma_v^2=0.5$,  i.e., a large variation of the variance,  for different values of the total number of devices $K$. 
As mentioned in Section~\ref{sec:Bound1}, there is $K$ above which the rate does not change significantly.  
In numerical evaluations that are not shown in this paper, we also observed that the larger the variance of the $\beta$'s, the larger the number of devices to reach this  steady state. 
\begin{figure}
\centering
\includegraphics[width=8cm]{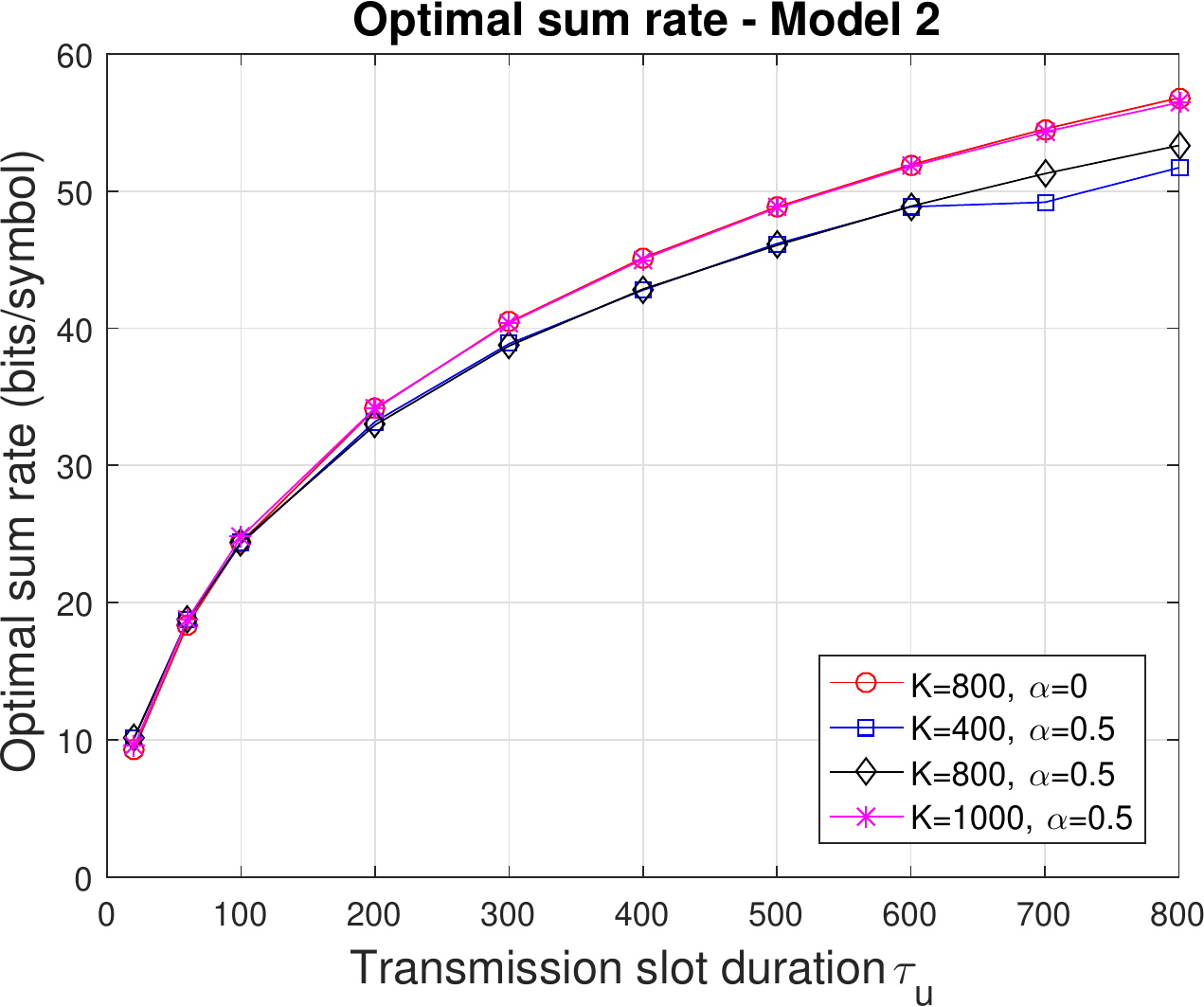}
\caption{Model 2. Optimal  sum rate $\cR_1$, for $M=100$, $K = \{400, 800, 1000\}$, $\sigma_n^2=\{0, 0.5\}$.}
\label{fig:R1opt_2}
\end{figure}

In Figure~\ref{fig:R1opt_3}, for Model 2, the sum rate curves, optimized $\tau_p$ and $p_a K$ are shown for different values of $\sigma_v^2$ and a large $K=1000$. 
Performance degrades as $\sigma_v^2$ increases. However, the degradation is small, even with large $\sigma_v^2$. 
The optimal number of pilot sequences appears quite insensitive to the variance of $\beta$, while the optimal average number of devices increases with the variance.
The same was observed for Model 3. 

\begin{figure}[h!]
        \centering
        \begin{subfigure}[b]{\columnwidth} \centering 
\includegraphics[width=8cm]{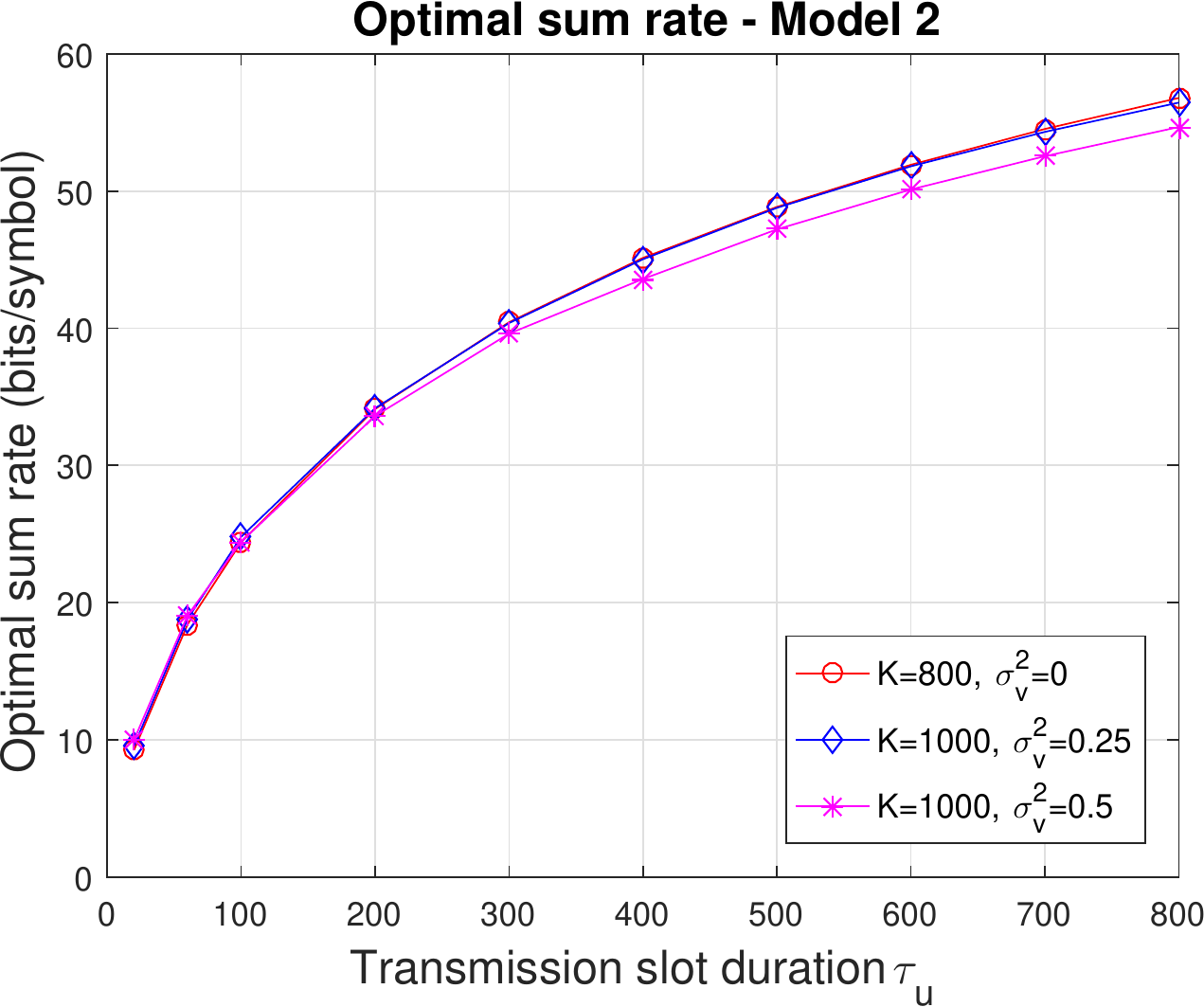}
                \caption{}
        \end{subfigure} \\ [5mm]
        \begin{subfigure}[b]{\columnwidth} \centering
\includegraphics[width=8cm]{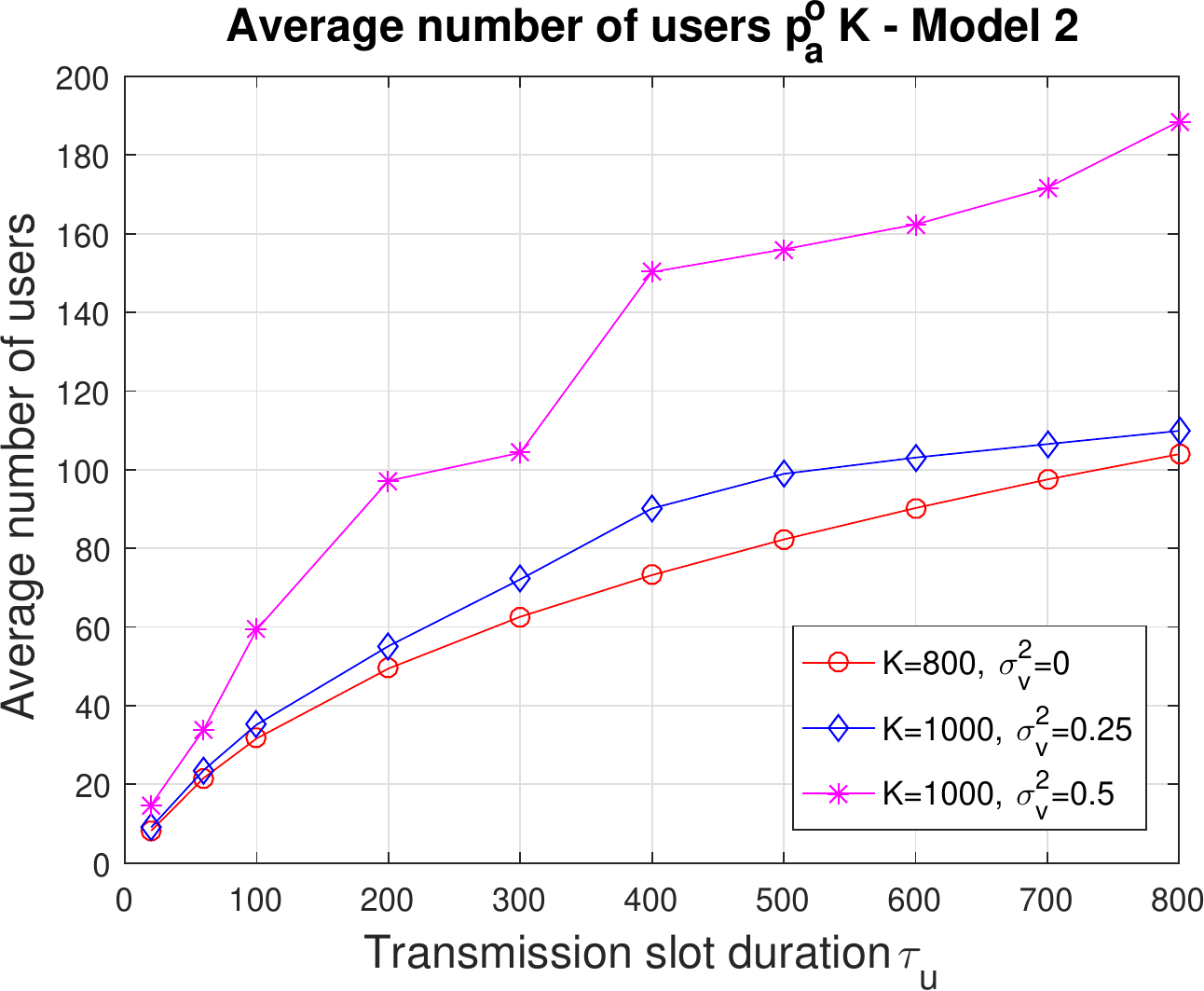}
                \caption{}
        \end{subfigure} \\  [5mm]
             \begin{subfigure}[b]{\columnwidth} \centering
\includegraphics[width=8cm]{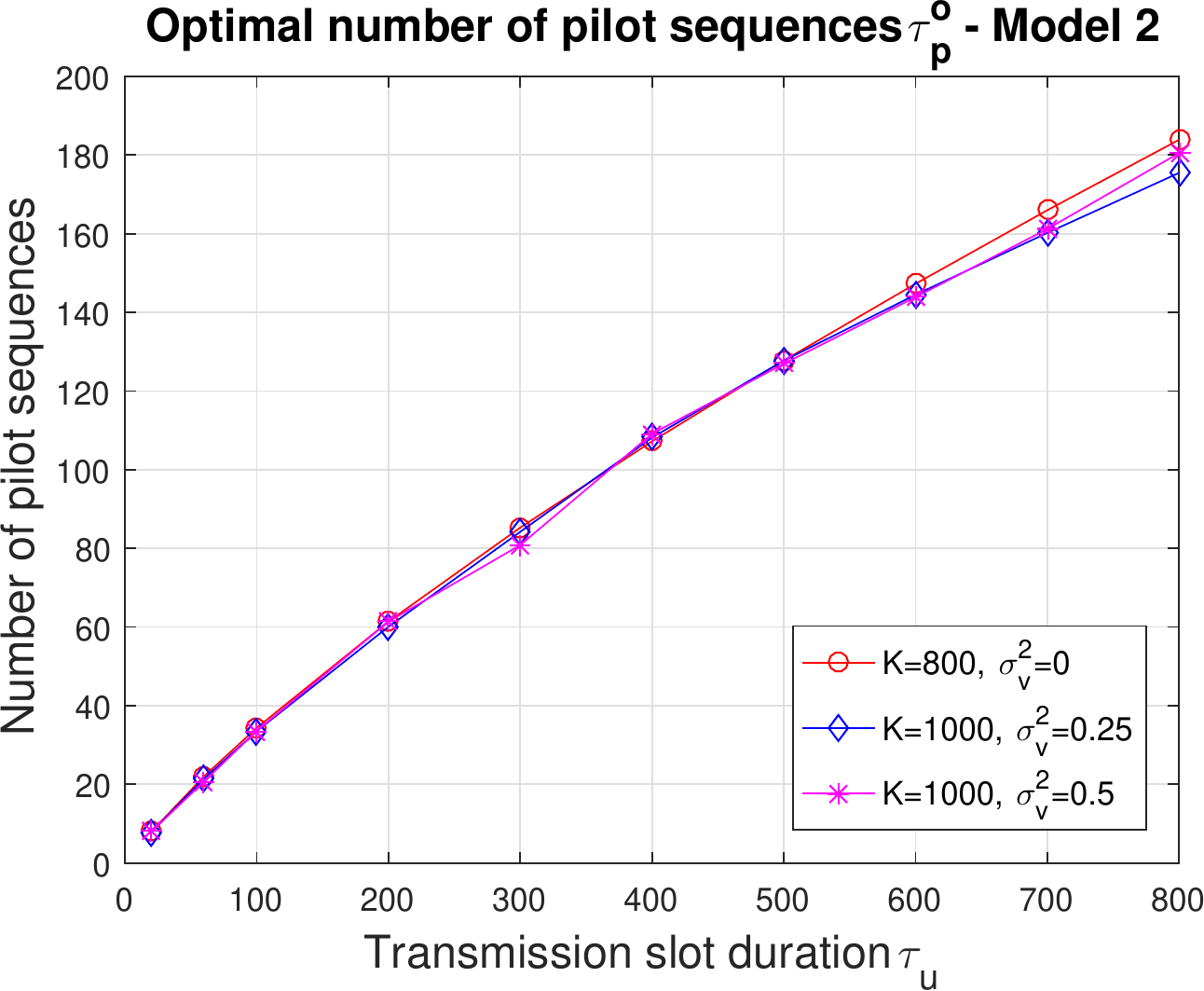}
                \caption{}
        \end{subfigure} 
        \caption{
Model 2. (a) Optimal  sum rate $\cR_1$,  (b) $p_a K$ and (c) $\tau_p$, for $M=100$, $K = \{800, 1000\}$, $\sigma_n^2=\{0, 0.25, 0.5\}$.         }
\label{fig:R1opt_3}
\end{figure}

\subsection{Optimization Methods}

We now show the performance metrics using the six different methods previously described. 
In Figures~\ref{fig:Dist025}, the sum rate and optimized $\tau_p$ is shown for  Model 3, $\alpha=0.25$. 
In Figures~\ref{fig:Dist05}, for the same model,  $\alpha=0.5$,  the sum rate and optimized $p_a K$ is shown.
The total number of devices $K$ is equal to 800. 
For all the parameter settings that we tested, the performance using $\cR_3$ or $\cR_a$ are almost the same, so that we refer only to $\cR_a$ from now on. 
\begin{figure}[h]
        \centering
        \begin{subfigure}[b]{\columnwidth} \centering 
\includegraphics[width=8cm]{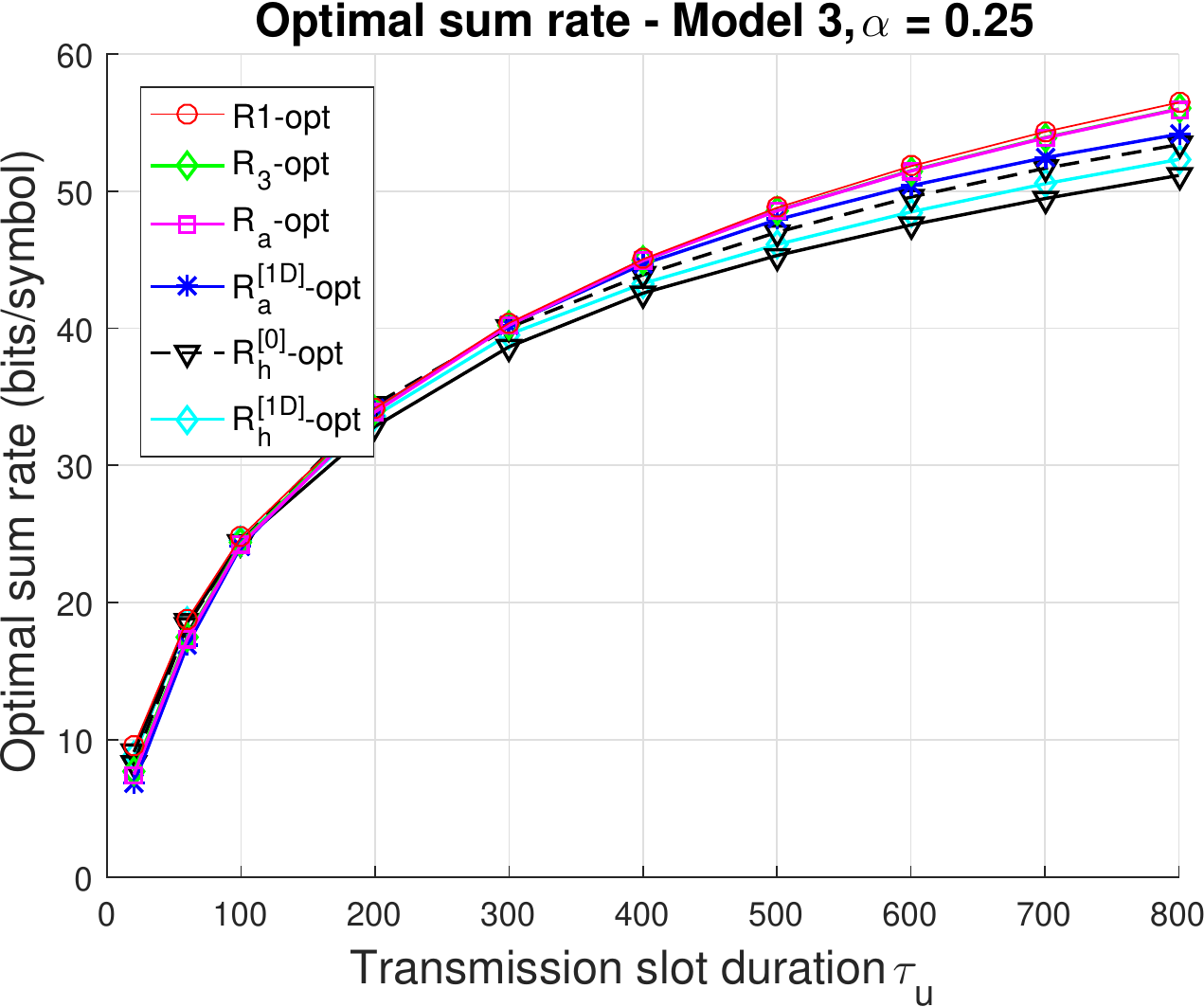}
                \caption{}
        \end{subfigure} \\ [5mm]
        \begin{subfigure}[b]{\columnwidth} \centering
\includegraphics[width=8cm]{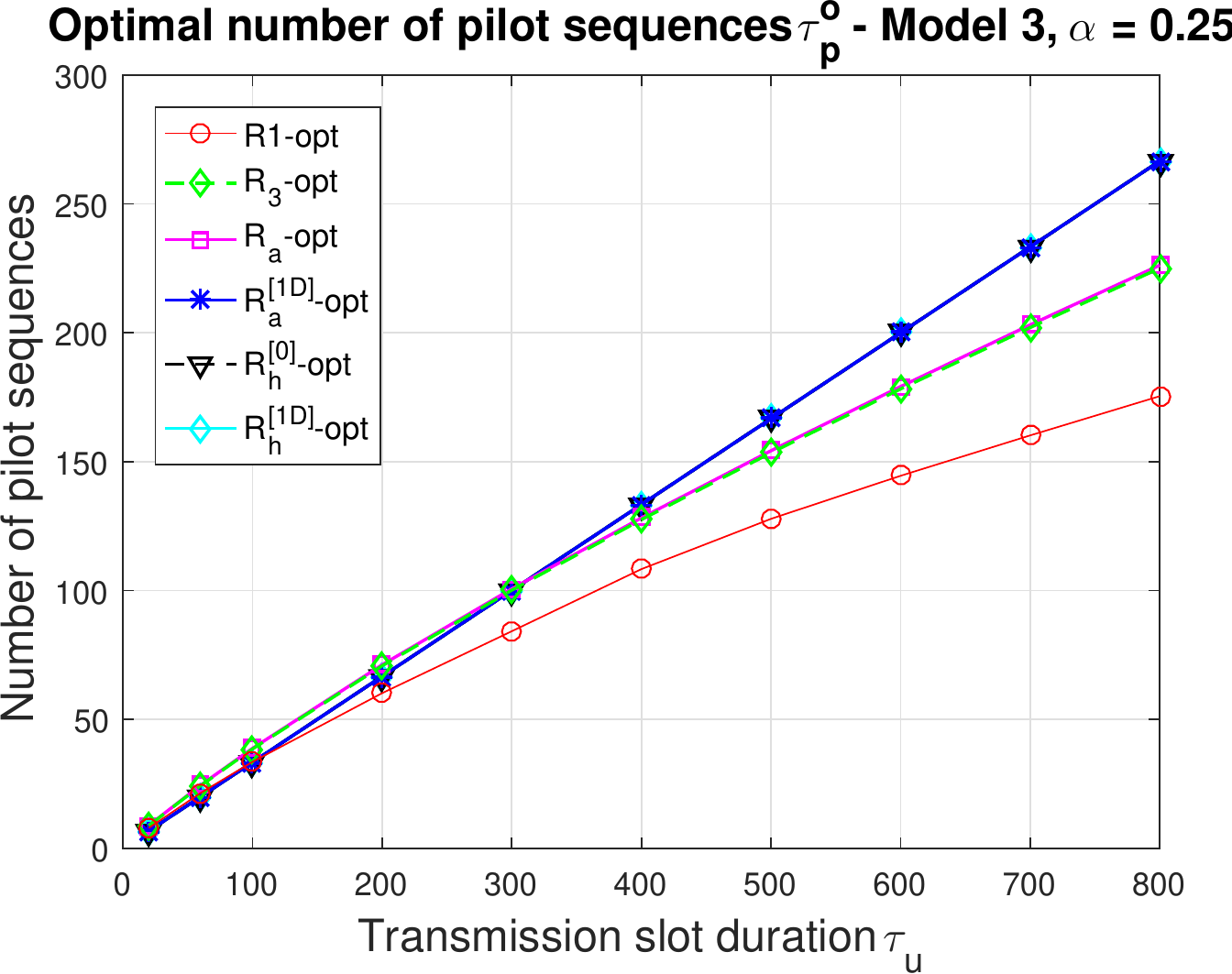}
                \caption{}
        \end{subfigure} \\ 
        \caption{
Model 3. (a) Optimal  sum rate $\cR_1$ and (b) $\tau_p$, for $M=100$, $K = 800$, $\alpha= 0.25$. }
\label{fig:Dist025}
\end{figure}

For the medium value $\alpha=0.25$, $\cR_a$-opt follows closely $\cR_1$-opt. There is a slight degradation for the heuristic solutions for larger values of $\tau_u$ while heuristic solutions perform slightly better for smaller values of $\tau_u$. 
This behavior is confirmed and magnified for $\alpha=0.5$. 
The optimal number of pilot sequences tends to be overestimated compared to $\cR_1$-opt when the  transmission slot duration becomes large. 
The gap  in the estimation of $p_a K$ is quite large, but this does not show significantly in the sum rate $\cR_1$. 
This can be explained by the fact that $\cR_1$ is quite flat around the optimal point. 

\begin{figure}[t]
        \centering
        \begin{subfigure}[b]{\columnwidth} \centering 
\includegraphics[width=8cm]{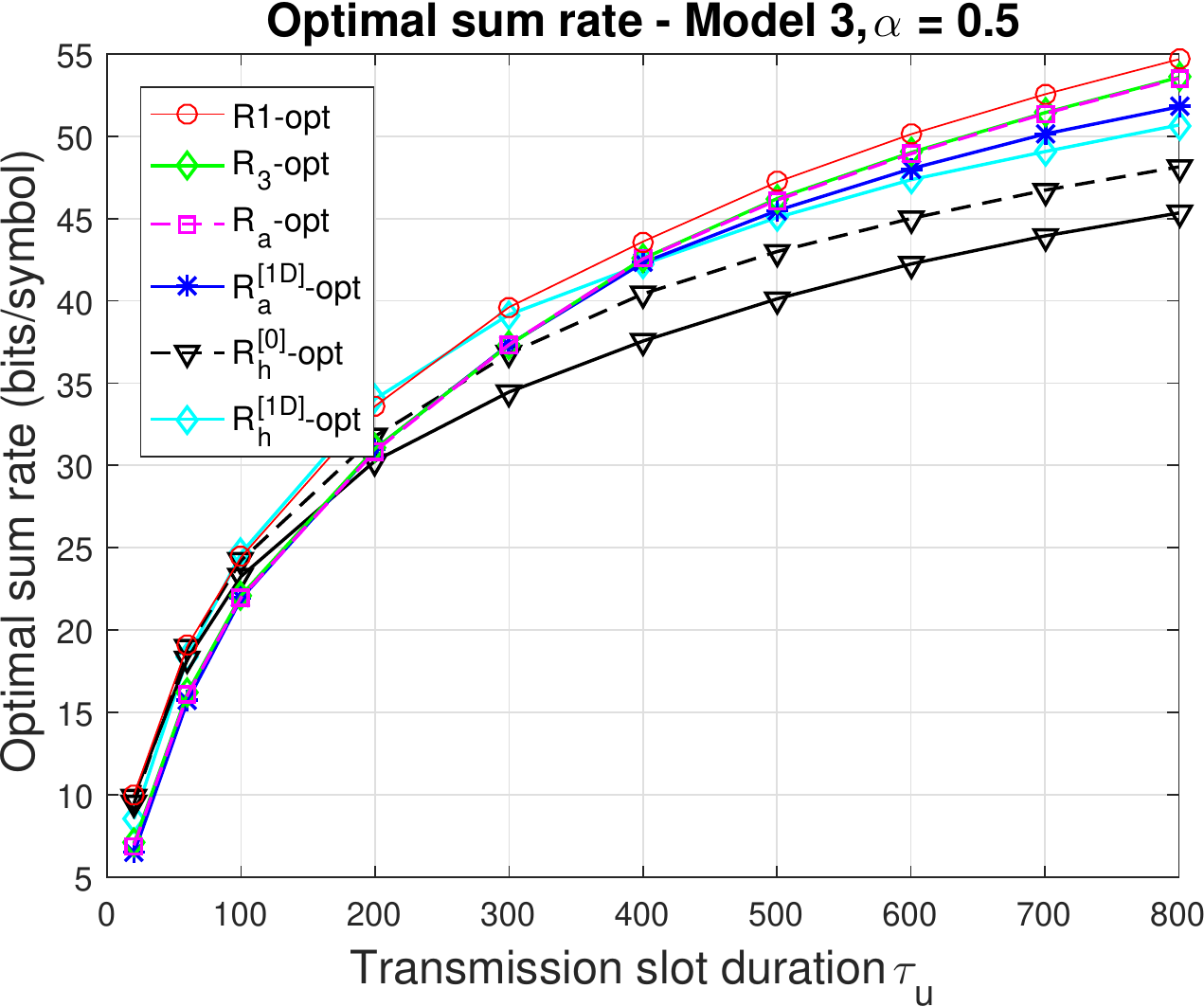}
                \caption{}
        \end{subfigure} \\ [5mm]
        \begin{subfigure}[b]{\columnwidth} \centering
\includegraphics[width=8cm]{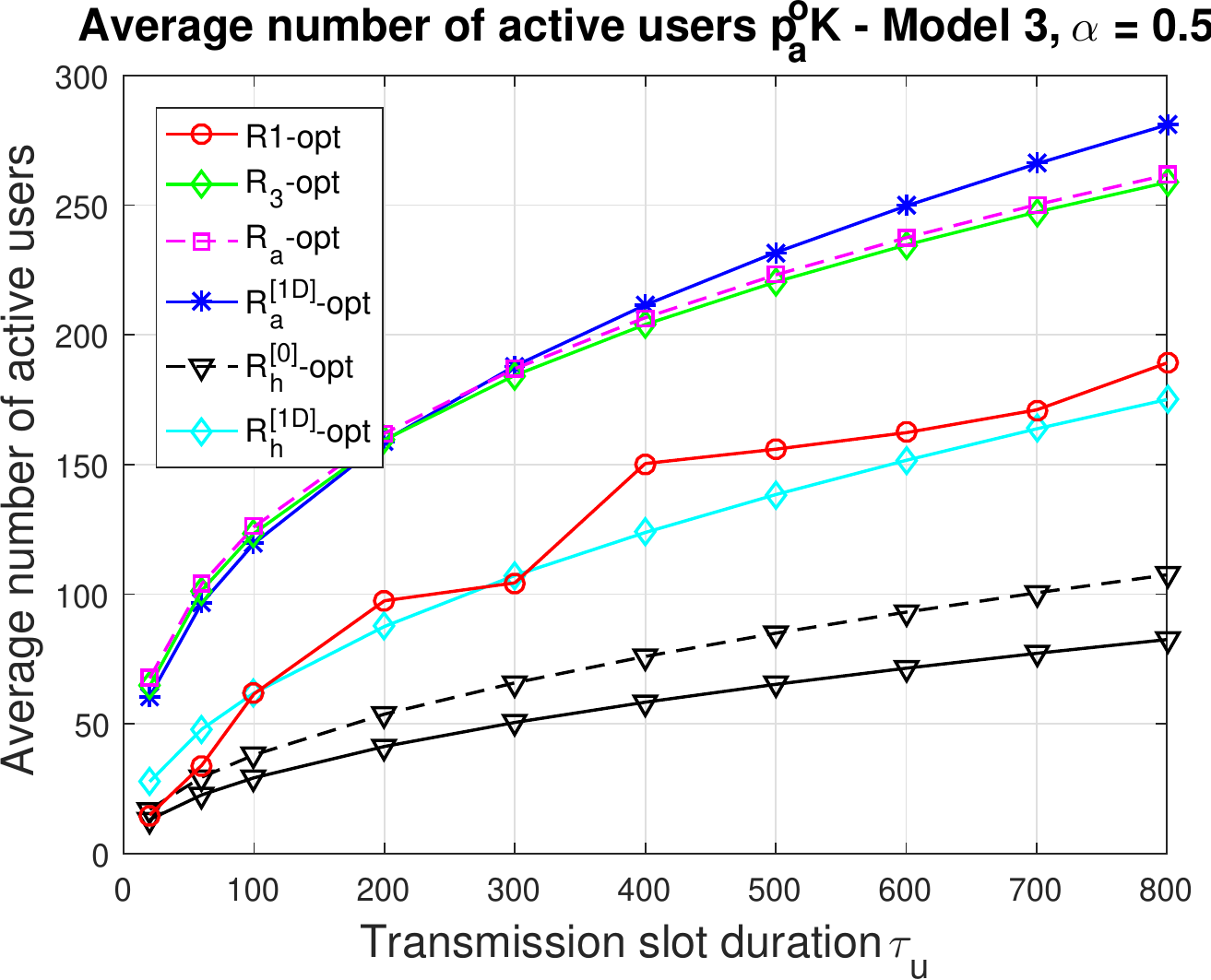}
                \caption{}
        \end{subfigure} \\ 
        \caption{
Model 3. (a) Optimal  sum rate $\cR_1$ and (b) $p_a K$, for $M=100$, $K = 800$, $\alpha= 0.5$. }
\label{fig:Dist05}
\end{figure}

\begin{figure}
\centering
\includegraphics[width=8cm]{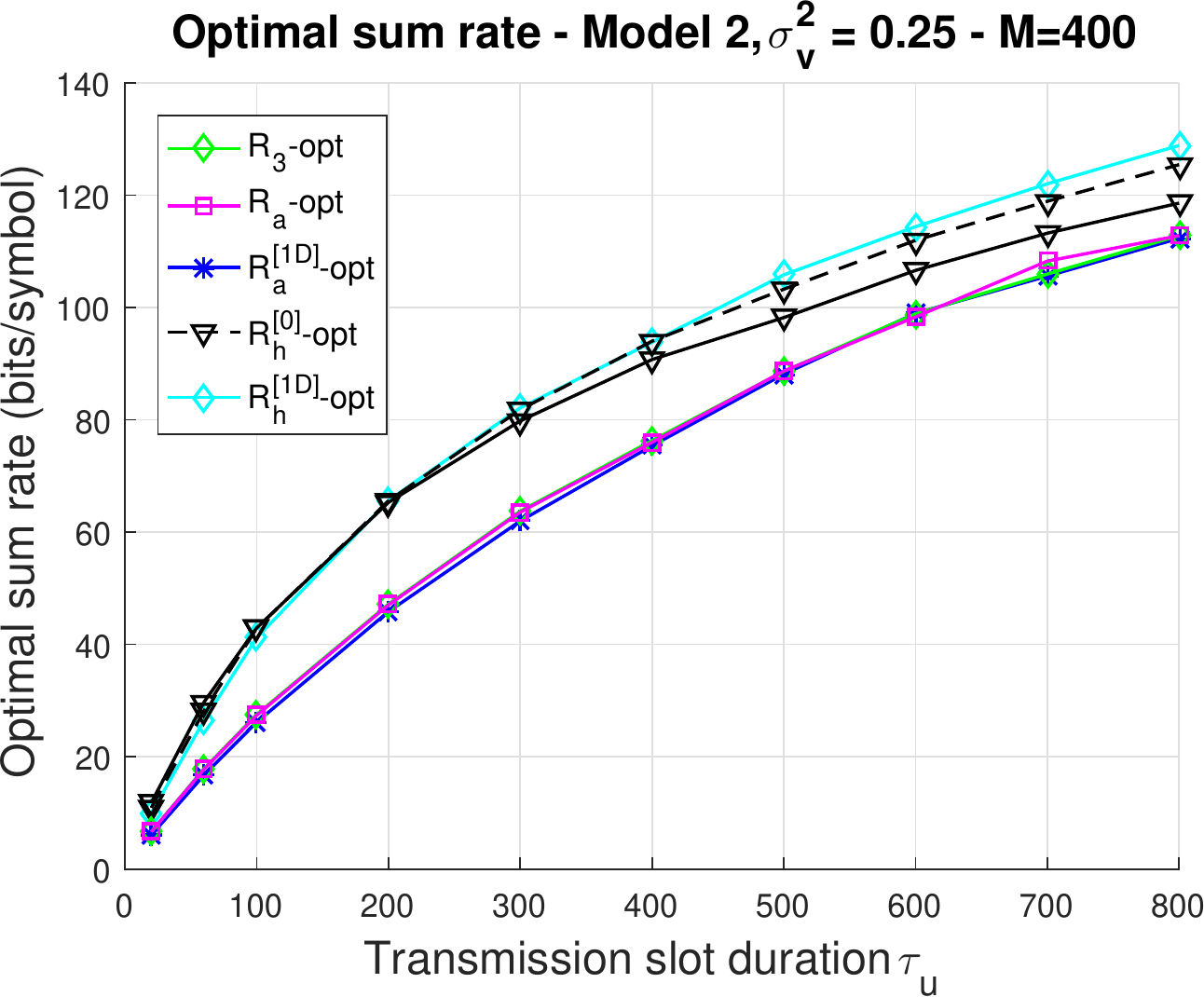}

\caption{Model 2. Optimal average sum rate $\cR_1$ for $M=400$, $K = 800$, $\sigma_v^2= 0.25$. \label{fig:dB025_M400}}
\end{figure}


At last,  we show a more extreme case, with $M=400$, Model 2 and $\sigma_v^2=0.25$, where in  the selected range of $\tau_u$, 
$\cR_a$ based methods perform significantly worse than the heuristic methods. 
In such a case, power control is the preferred solution as it leads to a robust and simple optimization. 
If power control is not possible, it is clear that resorting to direct optimization of $\cR_1$ is preferable, while heuristic solutions could be of significant value.

\section{Conclusion}
Massive MIMO is recognized as a primary technology in 5G for its ability to provide very high data rates. It is also a major enabling tool for machine-type communications where the sheer number of devices can be accommodated relying on the high multiplexing gain of massive MIMO and low-power devices can be served relying on its large array gain. In this paper, we have proposed a joint pilot and data transmission protocol based on random access that is adapted to the intermittent activity pattern of the devices. This protocol is organized in transmission slots 
and relies on the averaging of the pilot collision events across the transmission slots. 
It is suited for delay-tolerant and low-rate applications. 
We have provided performance expressions as well as optimization tools that are particularly important for a system where the activity of the devices and the number of pilots have to obey certain statistical rules.

\appendices

\section{Proof of Theorem~\ref{theorem:first-theorem}}
\label{sec:proof1}

The derivation of $\underline{\SINR}_1(\{\beta\})$ follows~\cite{Ngo2013a,Bjornson2016a}. The pre-log term  in~\eqref{eq:cR1}  accounts for the activity probability of devices.  The main derivation steps are reported below. 

\subsection{Derivation of $\underline{\SINR}_1$}

We first compute the achievable sum rate for a fixed number of active devices $K_a$.
The $M\times 1$ vectorial received signal in the data phase is
\begin{equation}
{\mathbf{y}_d}  =
 \sum_{i =1}^{K_a} \mathbf{g}_k x_k      + \mathbf{n}
\end{equation}
where $x_k$ is the transmitted symbol from device $k$ and $\mathbf{n}$ is the additive  noise vector.
We isolate the contributions with index set $ \{0, {\mathcal C}_0 \}$ and its complement and write 
${\mathbf{g}}_k$ as  $\hat{\mathbf{g}}_k -  \bm{\varepsilon}_k$, 
where $\bm{\varepsilon}_k$ is the  error vector in the MMSE estimation.
The received signal can be rewritten as
\begin{eqnarray}
{\mathbf{y}_d}    = 
%
   \sum_{j \in \{0, {\mathcal C}_0 \} }  ( \hat{\mathbf{g}}_j -  \bm{\varepsilon}_j) x_j  + 
  \sum_{j \notin \{0, {\mathcal C}_0 \} } \mathbf{g}_j x_j      + \mathbf{n}  
\end{eqnarray}
The BS applies MRC based on channel estimate $\hat{\mathbf{g}}_0$ in (\ref{eq:mmse}):
 \begin{eqnarray}
\hat{\mathbf{g}}_0^H {\mathbf{y}_d} =  
  \hat{\mathbf{g}}_0^H \hat{\mathbf{g}}_0 x_0
+
  \sum_{j \in  {\mathcal C}_0  }   \hat{\mathbf{g}}_0^H \hat{\mathbf{g}}_j x_j  
 + \\
 \sum_{j \in \{j, {\mathcal C}_0 \} }   \hat{\mathbf{g}}_0^H{\bm{\varepsilon}}_j x_j  + 
 \sum_{j \notin \{0, {\mathcal C}_0 \} }  \hat{\mathbf{g}}_0^H\mathbf{g}_j x_j 
 +
\hat{\mathbf{g}}_0^H \mathbf{n} 
\end{eqnarray}
where  $  \hat{\mathbf{g}}_0^H \hat{\mathbf{g}}_0 x_0$ is the signal of interest. The other additive terms are treated as a Gaussian noise. Using the independence of $\bm{\varepsilon}_j$ and 
 $\hat{\mathbf{g}}_0$ in MMSE estimation, those terms are independent from the signal of interest. 
From (\ref{eq:mmse3}), the achievable rate is
\begin{equation}
\underline{R} = 
\E_{\{\bfh_k\}}
\log_2 \left( 1+\underline{\SINR} \right) 
\label{eq:Capa}
\end{equation}
where $\underline{\SINR}$ is equal to
\begin{eqnarray}
\frac{{} |\hat{\mathbf{g}}_0^H  \hat{\mathbf{g}}_0 |^2}
{
{}  \displaystyle{\sum_{j \in  {\mathcal C}_0  }}    |  \hat{\mathbf{g}}_0^H \hat{\mathbf{g}}_i  |^2
 + 
 {} \hspace{-3mm} \sum_{j \in \{k, {\mathcal C}_0 \} }  \hspace{-3mm} |  \hat{\mathbf{g}}_0^H{\bm{\varepsilon}}_i  |^2 +
 {} \hspace{-3mm} \sum_{j \notin \{k, {\mathcal C}_0 \} } \hspace{-3mm} |\hat{\mathbf{g}}_0^H\mathbf{g}_i |^2+ 
  \|\hat{\mathbf{g}}_0 \|^2 
}.
\end{eqnarray}
The notation $\E_{\{\bfh_k\}}$ indicates the expected value w.r.t. small scale fading terms $\bfh_k$, defined as $\bfg_k = \sqrt{\beta_k} \bfh_k$.

Following~\cite{Ngo2013a}, using Jensen's inequality on function $\log_2(1+1/x)$ allows averaging over  the channel fast fading of the interferers. 
We obtain a lower bound as:  
\begin{equation}
\underline{R}_1(\{\beta\})= 
\log_2 \left( 1+\underline{\SINR}_1(\{\beta\})\right) 
\label{eq:capaP}
\end{equation}
where $1/\underline{\SINR}_1(\{\beta\})$ is equal to
\begin{eqnarray}
\frac{  \sum_{j \in \mathcal{C}_0}  \hspace{-1mm} \beta_j^2}
{ \beta_0^2 }
+ 
 \frac{  \sum_{j \in  \{0, \cC_0  \} } \hspace{-1mm} \sigma^2_{\bm{\varepsilon}_j}  }
 {(M-1) \sigma^2_{\hat{\mathbf{g}}_0} } 
 + 
 \frac{    \sum_{j \notin \{0, \cC_0  \}}  \hspace{-1mm} \beta_j  + 1 }
 {(M-1) \sigma^2_{\hat{\mathbf{g}}_0} }.
\label{eq:snrP}
\end{eqnarray}
Replacing the expression of $\sigma^2_{\bm{\varepsilon}_j}$ and  $\sigma^2_{\hat{\mathbf{g}}_0}$  in  (\ref{eq:mmse3}) and (\ref{eq:mmse2}), 
we obtain (\ref{eq:sinr1}). 

The first term in \eqref{eq:snrP} represents the interference induced by pilot contamination, the 
second  comes from channel estimation errors and the third term comes is due to noise and residual interference after MRC.

\subsection{Derivation of the pre-log term}

(\ref{eq:capaP})  gives a lower bound on the maximal achievable rate of one given device $0$ with a population of $K_a$ active devices, with exactly $c$ contaminators and a given realization of the $\beta$'s.
This bound is tight compared to the maximal achievable rate thanks to channel hardening. 
Next, we give the different steps leading to \eqref{eq:cR1}.
Note that when the notations become too heavy and when there is no ambiguity, we give up the dependency w.r.t. the parameters. 
Furthermore, the notations in equations \eqref{eq:cR1} and \eqref{eq:R1} have been simplified compared to the notations adopted in this appendix. 

\begin{itemize}

\item We compute the achievable rate for device $0$ averaged over the possible sets of contaminators  within the set of $K_a-1$ active devices excluding device $0$. This last set is denoted as  ${\cal K}_a^{0}(\{\beta\})$ and the set of contaminators is denoted as 
 $\cC_0(c, {\cal K}_a^{0}(\{\beta\}))$.
 The rate is
\begin{equation}
\E_{\cC_0} \!  \left[ \underline{R}_1 \! \left( \beta_0,\cC_0(c, {\cal K}_a^{0}(\{\beta\})) | {\cal K}_a^{0}(\{\beta\}) \right)
\right]
\label{eq:step1}
\end{equation}
where $\E_{\cC_0}$ represents the expected value w.r.t. all possible sets of $c$ contaminators within  ${\cal K}_a^{0}$. 

\item  Next, we compute the expected value of \eqref{eq:step1} accounting for the random selection of the pilot sequences within the set ${\cal K}_a^0$ and obtain
\begin{equation}
\underline{R}_1^c \! \!  =  \!\! \sum_{c=0}^{K_a-1} p(c|{\cal K}_a^0) \E_{\cC_0} \!  \left[  \underline{R}_1 \! \left(  \beta_0, \cC_0(c, {\cal K}_a^{0}(\{\beta\}))| {\cal K}_a^{0}(\{\beta\}) \right) \right]
\label{eq:step2}
\end{equation}
where
$p(c|{\cal K}_a^0)$ is the probability of having $c$ contaminators to device $0$ within the set ${\cal K}_a^0$.

\item  Next, we compute the expected value of \eqref{eq:step2} accounting for the activation probability, i.e., the probability to have $K_a-1$ active devices within a total number of $K-1$ devices (as device $0$ is excluded). We get
\begin{eqnarray}
\sum_{K_a=1}^{K}
p(K_a-1|K-1) \E_{ {\cal K}_a^0 } \!  \left[ \underline{R}_1^c   \right]
\label{eq:step3}
\end{eqnarray}
where $\E_{ {\cal K}_a^0}$ represents the expected value w.r.t. all possible sets of $K_a-1$ active devices excluding device~$0$. 

\item  Now, accounting for the activation probability $p_a$ of device $0$ and noting that $p_a p(K_a-1|K-1) =  \frac{K_a}{K} p(K_a|K) $, the achievable rate for device $0$ averaged over the contamination events and activation probability is
\begin{eqnarray}
\underline{R}_1^0 (\beta_0)  \!\! = \!\!\frac{1}{K}
\sum_{K_a=1}^{K}
p(K_a|K) K_a
\!\!   \sum_{c=0}^{Ka-1} p(c|{\cal K}_a^0) 
\E_{ {\cal K}_a^0, \cC_0}  
\!  \left[ \underline{R}_1 \right]
\label{eq:step4}
\end{eqnarray}

\item   Finally, the sum rate is 
$
 \sum_{k=1}^K
\underline{R}_1^0 (\beta_k)
$ equal to 
\begin{eqnarray}
\sum_{K_a=1}^{K}
p(K_a|K) K_a
\sum_{c=0}^{Ka-1} p(c|{\cal K}_a^0) \\
\frac{1}{K} \sum_{k=1}^K
\E_{ {\cal K}_a^k, \cC_k}  \!  \left[ 
\underline{R}_1 \left(  \beta_k, \cC_k(c, {\cal K}_a^{k}(\{\beta\}))| {\cal K}_a^{k}(\{\beta\}) \right) \right]
\label{eq:step5}
\end{eqnarray}

\item   We make the approximation
\begin{eqnarray}
\frac{1}{K} \sum_{k=1}^K \E_{ {\cal K}_a^k , \cC_k}
 \!  \left[  \underline{R}_1 \right] \approx
\E_{\{\beta\}}   \!  \left[  
\underline{R}_1 \right]
\label{eq:step6}
\end{eqnarray}
where $\E_{\{\beta\}} $ is the expected value  of all $\beta_k$s with $k \neq 0$ w.r.t. 
their probability density function. 
Given that the $\beta_k$'s are all independent,
this approximation becomes better as $K$ grows large.
Expression \eqref{eq:step6} gives us \eqref{eq:cR1}.
\end{itemize}

\section{Scaling Laws}
\label{sec:App2}

An asymptotic analysis is performed with the three cases: 
\begin{itemize}
\item Case 1: $M \gg  \tau_u$, i.e.,  ${\tau_u}/{M}$ close to 0. 
\item Case 2: $M \ll  \tau_u$, i.e.,  ${M}/{\tau_u}$ close to 0. 
\item Case 3: $M \sim  \tau_u$, i.e., $M$ and $\tau_u$ are of the same order. 
\end{itemize}
The analysis is directly based on the asymptotic expression of the sum rate  $\cR_a$ in \eqref{eq:cRa}, instead of $\cR_3$,
which is justified by examining the asymptotic conditions $\tau_u \gg 1$ and $M \gg 1$ and the ones considered below.
Most of the results are based on first or second order Taylor approximations. To ease the presentation, we write the order of the approximations only for  the final optimized quantities.  
Furthermore, $\tau_p$ is not constrained by the channel coherence time in Appendix~\ref{sec:Case1}, Appendix~\ref{sec:Case2}, Appendix~\ref{sec:Case3}.
We treat this case in Appendix~\ref{sec:Case4}.

\subsection{Case 1: $M \gg  \tau_u$ (or $M \gg  \tau_p$)}
\label{sec:Case1}

We consider different variation domains for $p_a K$  and $\tau_p$ in order to study the optimization of the rate function $\cR_a$.

\subsubsection{$p_a K \gg \tau_p$}

\label{sec:Case11}
In the expression $\cR_a$, 
$\underline{\SINR}_a$ can be approximated as:
\begin{eqnarray}
\underline{\SINR}_a \approx \frac{M \tau_p \beta_0^2}
{ \overline{\beta^2} M K p_a + \overline{\beta}^2 p_a^2 K^2}  
\end{eqnarray}
provided that  $\overline{\beta^2}M  \gg \overline{\beta} \beta_0 \tau_p$. 
Furthermore, if $\tau_p {\beta_0^2}   \ll  \overline{\beta}^2 p_a K $, then 
$\underline{\SINR}_a \ll 1$. 

The two conditions on $\beta_0$
are verified for Model 1 and Model 3 (see Section~\ref{sec:Models}) provided $\alpha$  is such that $\beta_0$ keeps  values of same order as 
$\overline{\beta^2}/\overline{\beta}$ and $\overline{\beta}$.
For Model 2, we assume that  the largest value of $\beta_0$, denoted as $\beta_T$, verifying the two conditions is sufficiently large so that 
the expected value of the log-term in (\ref{eq:cRa}) restricted to domain $[0, \beta_T]$
is approximately  equal to the expected value over the whole domain.  
This result is used below where the expected value w.r.t. $\beta_0$ over the restricted domain is approximated as $\overline{\beta}$.
The same type of conditions on $\beta_0$ are met in some of the derivations that will follow, and can be treated in the same way so that 
we directly consider the case of perfect power control to simplify.

Next, we proceed to Taylor series approximations in order to find the optimal values of parameter $p_a K$ first and then $\tau_p$. 
We assume $\overline{\beta^2} M \gg  \overline{\beta}^2 p_a K$, equivalent to  $M \gg  p_a K$ for the models that we consider. A second order Taylor approximation of the term in $\log$ gives: 
\begin{equation}
\cR_a \log(2) \approx
\frac{\tau_u -\tau_p}{\tau_u} \E_{\beta_0}\! \left[
\frac{M \tau_p \beta_0^2 }
{ \overline{\beta^2} M  + \overline{\beta}^2 p_a K}\!  - \!
\frac{1}{2}\! \left( \frac{ \tau_p \beta_0^2 }
{ \overline{\beta^2}  } \right)^2 \!\! \! \frac{1}{K p_a}
\right]
\label{eq:Ra_approx}
\end{equation}
Taking the derivative w.r.t. the variable $p_a K$, we get:
\begin{equation}
\frac{\partial \cR_a}{\partial p_a K}  \approx 0 \Leftrightarrow
 \E_{\beta_0}\! \left[ - \frac{M \tau_p \beta_0^2 \overline{\beta}^2 }
{ [ \overline{\beta^2} M + \overline{\beta}^2 p_a K ]^2 } +
\frac{1}{2}  \frac{ \tau_p^2 \beta_0^4 } 
{ {\overline{\beta^2}}^2  p_a^2 K^2} \right] \approx 0
\end{equation}
We assume first $\overline{\beta^2} M \gg  \overline{\beta}^2 p_a K$. The other cases are treated later. 
We obtain the expression of the optimal value of $p_a K$:
\begin{eqnarray}
p_a^o K  = \sqrt{\frac{1}{2}   \frac{ \overline{\beta^4}}{\overline{\beta}^2 \overline{\beta^2}}} \sqrt{M \tau_p} + 
O(\tau_p ).
\label{eq:paK_opt}
\end{eqnarray}
Next, keeping only the first order terms in $\tau_p$  in \eqref{eq:Ra_approx}, the optimal value of  $\tau_p$ is the one maximizing $(\tau_u-\tau_p) \tau_p$, i.e., 
\begin{eqnarray}
\tau_p = \frac{\tau_u}{2} + O\left(\tau_u \sqrt{\frac{\tau_u}{M}}\right). 
\label{eq:tp_opt}
\end{eqnarray}

It can easily be verified that solutions $p_a^o K$ and $\tau_p^o$  are within the definition intervals $p_a K \gg \tau_p$, $\overline{\beta^2} M \gg  \overline{\beta}^2 p_a K$ for $M \gg \tau_u$.
The maximum value of the sum rate is 
\begin{eqnarray}
\cR_a^o = \frac{\tau_u}{4  \log(2)} + O\left( \tau_u\sqrt{\frac{\tau_u}{M}}\right). 
\end{eqnarray}

The last step is to verify that $\cR_a$ does not get larger values for the subcase where 
$p_a K \gg M$ or $p_a K \sim M$. For $p_a K \gg M$:
\begin{eqnarray}
\cR_a \approx
\frac{\tau_u - \tau_p}{\tau_u } p_a K \log_2\left( 1+ \frac{M \tau_p}{p_a^2 K^2}  \right).
\end{eqnarray}
For a fixed $\tau_p$, $\cR_a$ has its maximal value for $p_a^o K  =  \sqrt{M \tau_p}/\sqrt{3 s_0}$, $s_0 \approx 3.92$ 
(see Section~\ref{sec:Bounds3}). 
As $p_a^o K \ll M$,   then $p_a K > p_a^o K$ and $\cR_a$ decreases  with $p_a K$ and is maximal when $p_a K \sim M$, in which case $\cR_a = \frac{\tau_u - \tau_p}{\tau_u } \tau_p/ M$. This value is smaller than the maximal value found in the interval $p_a K \ll M$. 
For $p_a K \sim M$, the value of the sum rate is also smaller.


\vspace{2mm}
\subsubsection{$p_a K \ll \tau_p$}
The sum rate is approximated as
\begin{eqnarray}
\cR_a \approx
\frac{\tau_u - \tau_p}{\tau_u } p_a K \log_2\left( 1+ \frac{ \tau_p}{ K p_a }  \right).
\end{eqnarray}
For a fixed value of $\tau_p$, $\cR_a$ increases with $p_a K$ and takes its maximal value when $p_a K \sim \tau_p$, which case we consider next. 

\vspace{2mm}
\subsubsection{$p_a K = \alpha \tau_p$, $\alpha = O(1)$}
\begin{eqnarray}
\cR_a  \approx
\frac{\tau_u - \tau_p}{\tau_u }\tau_p \alpha \log_2 (1 + 1/\alpha).
\end{eqnarray}
The maximal value is smaller than the one obtained in the case $p_a K \gg \tau_p$.

\subsection{Case 2: $M \ll  \tau_u$}
\label{sec:Case2}

If $M \gg \tau_p$, then the results for Case 1 tells that the optimal value is of the form \eqref{eq:paK_opt} and \eqref{eq:tp_opt}.
This solution can be verified to give a smaller sum rate than the solution that is given when $M \ll \tau_p$ below.
The case $M \sim\tau_p$ is treated in Appendix~\ref{sec:Case3}.
We now assume $M \ll \tau_p$. Next we consider different definition intervals for the parameters.

\vspace{2mm}
\subsubsection{$p_a K \gg \tau_p$}
\begin{eqnarray}
\cR_a \approx
\frac{\tau_u - \tau_p}{\tau_u } p_a K \log_2\left( 1+ \frac{M \tau_p}{ p_a^2 K^2 }  \right).
\end{eqnarray}
As already mentioned in Appendix~\ref{sec:Case11}, for a fixed $\tau_p$, $\cR_a$ has its maximal value for $p_a^o K  =  \sqrt{M \tau_p}/\sqrt{3 s_0}$, $s_0 \approx 3.92$. 
As $p_a^o K \gg M$,   then $p_a K < p_a^o K$ and $\cR_a$ increases  with $p_a K$ and is maximal when $p_a K \sim \tau_p$, in which case $\cR_a = \frac{\tau_u - \tau_p}{\tau_u } M$. This value is smaller than the maximal value found in the interval $p_a K \ll \tau_p$. 

\vspace{2mm}
\subsubsection{$p_a K \ll \tau_p$}
\label{sec:Case22}

The SINR can be approximated as: 
\begin{eqnarray}
\underline{\SINR}_a \approx
 \frac{M \tau_p \beta_0^2}{ \overline{\beta}^2 p_a^2 K^2 +  \beta_0 \overline{\beta} \tau_p p_a K}.
\end{eqnarray}
With the same steps as in Case 1, $p_a K \gg \tau_p$, we obtain
\begin{eqnarray}
p_a^o K  = \sqrt{\frac{1}{2}   \frac{ \overline{\beta^4}}{\overline{\beta}^2 \overline{\beta^2}}} \sqrt{M \tau_p} + 
O(M ).
\label{eq:paK_opt2}
\end{eqnarray}

For the optimization w.r.t. $\tau_p$, a first order Taylor approximation of $\cR_a$ leads to
\begin{eqnarray}
\cR_a \log(2) & \approx &
\frac{\tau_u - \tau_p}{\tau_u }
\left(\frac{M \tau_p \beta_0^2}
{ \overline{\beta}^2 p_a K  +  \beta_0 \overline{\beta} \tau_p }  \right) \\
%
 & \approx & \frac{\tau_u - \tau_p}{\tau_u }
M \,
\E_{\beta_0}\!
\left( \frac{\beta_0}{\overline{\beta}} - 
\frac{\sqrt{M} }{ \sqrt{\tau_p} }  
\right) \\
 & \approx &
\frac{\tau_u - \tau_p}{\tau_u }
M 
\left( 1  - 
\frac{\sqrt{M} }{ \sqrt{\tau_p} }  
\right)
\label{eq:Case2_Ra}
\end{eqnarray}
Taking the derivative, we find that the optimal point  $\tau_p^o$  is the positive root of 3rd order polynomial equation: 
\begin{eqnarray}
- 2 x^{3/2}+ \alpha \sqrt{M} x  +  \tau_u \alpha\sqrt{M} = 0.
\end{eqnarray}
Writing $\tau_p$ as $\tau_p = a \tau_u$, we get:  
 $- 2 \frac{\tau_u}{M}  a^{3/2}+  a  +  1  = 0$. 
 
Assuming $a \ll 1$ and $a \ll \frac{\tau_u}{M}  a^{3/2}$, we obtain  $a = \left( \frac{M}{2 \tau_u} \right)^{2/3}$. 
One can further prove that values of $a$ that are not close to 0  lead to invalid solutions. Finally, the solution is:
\begin{eqnarray}
\tau_p^o = \left( \frac{M }{2 \tau_u} \right)^{2/3}    \tau_u  + o(\tau_u).
\end{eqnarray}

From (\ref{eq:Case2_Ra}), the sum rate can be written as:
\begin{eqnarray}
 \cR_a^o \log(2) = 
 M 
 + 
 O \left(
  M \left[   \frac{ M}{\tau_u} \right]^{2/3} 
 \right).
\end{eqnarray}

\subsubsection{$p_a K = \alpha  \tau_p$, $\alpha = O(1)$}
\begin{eqnarray}
\cR_a \! \! \!\! & \approx &  \! \!\!\! \frac{\tau_u - \tau_p}{\tau_u }  \alpha  \tau_p\log_2\left( 1+ \frac{M \tau_p}{   \alpha^2   \tau_p^2 +\alpha  \tau_p^2 }  \right) \\
&\approx & \! \! \!\!
\frac{\tau_u - \tau_p}{\tau_u } 
\alpha
\left[
\frac{M}{\alpha +1}  -\frac{1}{2}\frac{M^2}{(\alpha +1)^2 \tau_p}
\right] / \log(2).
\end{eqnarray}
One can prove that the associated maximal value is
smaller than the one obtained in the case $p_a K \ll \tau_p$.

\subsection{Case 3: ${M}$ and ${\tau_u}$ are of same order and  ${M}/{\tau_u} = \delta$, $\delta = O(1)$.}
\label{sec:Case3}

We show that the solution is of the following form:
\begin{eqnarray}
\tau_p^o = a \tau_u ;
\quad
p_a^o K_a  = b \sqrt{M \tau_u}  
\end{eqnarray}
where  $a$ and $b$ are scalars of order 1 and are solution of the maximization of the  function:
\begin{eqnarray}
\cR_{a}  \! =  \!
(1 - a ) b  \,
\E_{\beta_0}  \!\left[ \log_2   \!\!\left ( \!1+  
  \frac{ a {\beta}_0^2  \delta}{
       b \overline{\beta^2}  \delta \sqrt{\delta} +
               b^2 \overline{\beta}^2 \delta      +
        a b  \overline{\beta}  {\beta}_0  \sqrt{\delta}  }
\right) \right]  \!.
\label{eq:ab}
\end{eqnarray}
The expression \eqref{eq:ab} is obtained from the expression of $\cR_a$ in \eqref{eq:cRa}. One key point is that is only depends on ${M}/{\tau_u} = \delta$. 

We show that $a$ and $b$ are of order 1 provided that $\delta$ is of order 1 through numerical evaluations. 
In Figure~\ref{fig:figab}, the optimized values of $a$ and $b$ are shown as a function of $\delta$.   
For values of $\delta$ around 1, the values of $a$ and $b$ are indeed of order 1. 
When $\delta$ tends to $\infty$,  the value saturates to $a=b=1/2$ as predicted in Appendix~\ref{sec:Case1}, 
while the values obtained when  $\delta$ tends to $0$ correspond to Appendix~\ref{sec:Case22}.

\subsection{$\tau_p$ constrained by the channel coherence time: $\tau_p \leq \tau_p^{\max}$.}
\label{sec:Case4}

For cases  $M\gg \tau_p^{\max}$ and  $M \ll \tau_p^{\max}$, using equations  \eqref{eq:paK_opt} and  \eqref{eq:paK_opt2}
\begin{eqnarray}
p_a^o K  = \sqrt{\frac{1}{2}   \frac{ \overline{\beta^4}}{\overline{\beta}^2 \overline{\beta^2}}} \sqrt{M \tau_p^{\max}} + 
O(M ).
\end{eqnarray}
For case  $M \sim\tau_p^{\max}$, $p_a^o K_a  = b \sqrt{M \tau_p^{\max}}$  where $b$ optimizes 
\begin{eqnarray}
\cR_{a} = 
b  \;
\E_{\beta_0} \left[ \log_2 \left (1+  
  \frac{  {\beta}_0^2  \delta'}{
       b \overline{\beta^2}  \delta \sqrt{\delta'} +
               b^2 \overline{\beta}^2 \delta'      +
         b  \overline{\beta}  {\beta}_0  \sqrt{\delta'}  }
\right) \right].
\label{eq:ab2}
\end{eqnarray}
with $\delta' = M/\tau_p^{\max}$.

\begin{figure}[t]
\centering
\includegraphics[width=7cm]{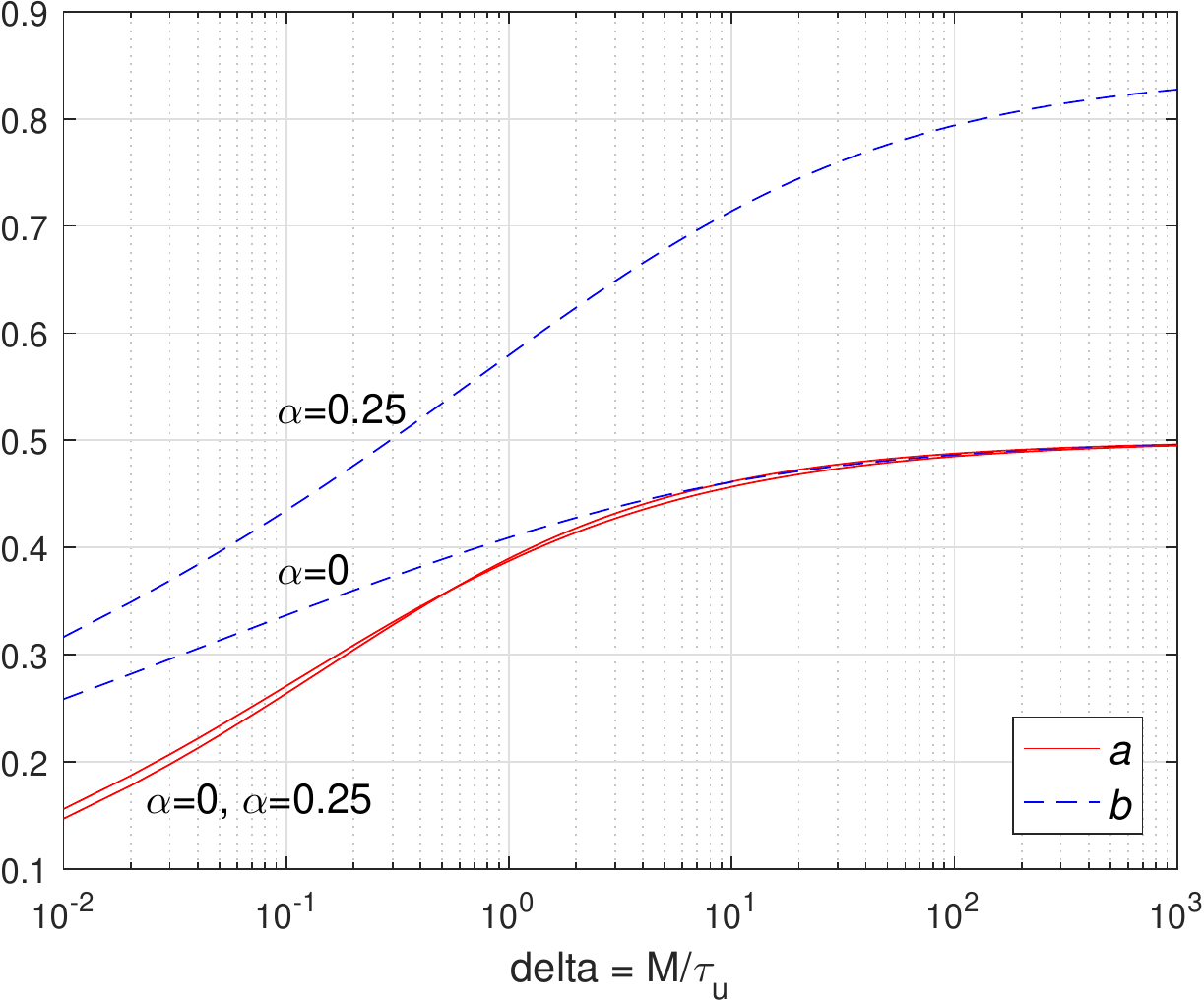}
\caption{Optimization of parameters a and b in Equation~(\ref{eq:ab2}) as a function of $\delta$. Model 3 with $\alpha=0$ and $\alpha=0.25$.}
\label{fig:figab}
\end{figure}

\bibliographystyle{IEEEbib}
\bibliography{strings,IEEEabrv,refs}

\end{document}